\theoremstyle{plain}
\newtheorem{theorem}{Theorem}
\newtheorem{property}{Property}
\newtheorem{proposition}{Proposition}
\newcommand{\R}{\mathbb{R}}
\newcommand{\E}{\mathbb{E}}
\newcommand{\p}{\mathbb{P}}
\newcommand{\cov}{\mathrm{Cov}}
\newcommand{\indep}{\rotatebox[origin=c]{90}{$\models$}}
\newcommand{\sign}{\mathrm{sign}}
\newcommand{\Bern}{\text{Bern}}
\newcommand{\fdr}{\mathrm{FDR}}
\newcommand{\indc}[1]{{\mathbf{1}_{\left\{{#1}\right\}}}}
\newcommand{\argmax}[1]{\underset{#1}{\arg\!\max}}
\newcommand{\argmin}[1]{\underset{#1}{\arg\!\min}}
\definecolor{myblue}{rgb}{.8, .8, 1}
\definecolor{mathblue}{rgb}{0.2472, 0.24, 0.6} 
\definecolor{mathred}{rgb}{0.6, 0.24, 0.442893}
\definecolor{mathyellow}{rgb}{0.6, 0.547014, 0.24}
\newcommand{\tX}{{\tilde{X}}}
\newcommand{\tZ}{{\tilde{Z}}}
\newcommand{\calB}{{\mathcal{B}}}
\newcommand{\calF}{{\mathcal{F}}}
\newcommand{\calG}{{\mathcal{G}}}
\newcommand{\calH}{{\mathcal{H}}}
\newcommand{\calN}{{\mathcal{N}}}
\newcommand{\calP}{{\mathcal{P}}}
\newcommand{\calS}{{\mathcal{S}}}
\newcommand{\eqd}{\stackrel{\textnormal{d}}{=}}
\newcommand{\reveal}{\mathrm{reveal}}
\long\def\comment#1{}
\newcommand{\overbar}[1]{\mkern 1.5mu\overline{\mkern-1.5mu#1\mkern-1.5mu}\mkern 1.5mu}
\title{Knockoffs with Side Information}
\author[1]{Zhimei Ren}
\author[1,2]{Emmanuel Cand\`es}
\date{\today}
\affil[1]{Department of Statistics, Stanford University, Stanford, CA 94305}
\affil[2]{Department of Mathematics, Stanford University, Stanford, CA 94305}
\begin{document}

\maketitle


		

			
			
				
			
			
\begin{abstract}
  We consider the problem of assessing the importance of multiple
  variables or factors from a dataset when side information is
  available.  In principle, using side information can allow the
  statistician to pay attention to variables with a greater potential,
  which in turn, may lead to more discoveries. We introduce an
  adaptive knockoff filter, which generalizes the knockoff procedure
  \citep{barber2015controlling,candes2018panning} in that it uses both
  the data at hand and side information to adaptively order the
  variables under study and focus on those that are most
  promising. {\em Adaptive knockoffs} controls the finite-sample false
  discovery rate (FDR) and we demonstrate its power by comparing it
  with other structured multiple testing methods. We also apply our
  methodology to real genetic data in order to find associations
  between genetic variants and various phenotypes such as Crohn's
  disease and lipid levels. Here, adaptive knockoffs makes more
  discoveries than reported in previous studies on the same datasets.
\end{abstract}
\bigskip
{\small \textbf{Keywords.} Multiple testing, variable selection, false
  discovery rate (FDR), knockoff filters, Bayesian two-group model, 
  genome-wide association study (GWAS).}

\section{Introduction}\label{sec:intro}

Imagine a geneticist has collected genotype and phenotype data from a
population of individuals. She plans to use her data to study the
effect of genetic variants on a certain complex disease within this
population. Prior to data analysis, it is often the case that some
knowledge about the genetic variants under study is available: for
instance, there may be existing works on related diseases, as well as
research about the exact same disease and its occurrence within other
populations. How then should our geneticist leverage this prior
information in her own study?  Moving away from genetics, we broadly
recognize that researchers have more often than not access to prior
domain knowledge, results from relevant studies, and so on. Therefore,
the general question is this: how should they use side information in
their data analysis to help them discover more relevant factors?  How
should this be done while controlling type-I errors so that we do not
run into the problem of irreproducibility?  Our paper is motivated by
such common situations and objectives.
		
\subsection{Controlled variable selection methods}
We begin by formalizing the variable selection problem in statistical
terms. Let $X = (X_1,\ldots,X_p)$ denote the covariate vector and $Y$
the response variable. We assume that the pair $(X,Y)$ is sampled from
$P_X \cdot P_{Y|X} $, where $P_X$ is the marginal distribution of $X$
and $P_{Y|X}$ the conditional distribution of $Y|X$. The inferential
goal is to test whether this conditional distribution depends on $X_j$
or not.  We call feature $j$ a null if the conditional distribution of
$Y|X$ does not depend on $X_j$ and a non-null otherwise. With this in
mind, let $\calH_0$
denote the set of nulls and put
$\calH_1 = \{1,...,p\}\backslash\calH_0$. A controlled variable
selection method aims to detect non-nulls from a pool of candidates
while controlling some form of type-I error. In this paper, we
consider the false discovery rate (FDR)
\citep{benjamini1995controlling},
\begin{align}\label{fdr}
  \fdr = \E\,\, \left[ \dfrac{|\hat{\calS}\cap \calH_0|}{1 \vee
  	|\hat{\calS}|} \right] , 
\end{align} 
where $a\vee b =\max(a,b)$. Above,
$\hat{\calS} \subset \{1, \ldots, p\}$ is the selected set of
covariates and $|\cdot|$ is the cardinality of a set.
		
%
Most classical FDR-controlling procedures require that we have
available valid p-values, and further require independence or
constrained dependence between these p-values (e.g.,
\citet{benjamini1995controlling,benjamini2001control,storey2002direct,storey2004strong}). However,
it is in general challenging to obtain valid p-values for hypotheses
of interest, especially in the high-dimensional regime where the
sample size $n$ is on the order of the number $p$ of covariates or
less.  This is the reason why common practice usually imposes
stringent model assumptions and the validity of the p-values ends up
relying on the correctness of the model. Researchers have noted that
in common regimes, the p-values obtained by classical methods do not
behave as desired, but rather in a way that will potentially inflate
the FDR (see e.g.,
\citet{dezeure2015high,sur2017likelihood,sur2019modern}). \emph{Model-X
  knockoffs}, introduced in \citet{candes2018panning}, bypasses the
need for p-values and offers a solution to the variable selection
problem without making any modeling assumptions about the conditional
distribution of $Y|X$. The strength of this approach is that it does
not ask the statistician to assume away the form of the relationship
between the response variable and the family of covariates, namely,
$P_{Y|X}$ which is 1) usually unknown and 2) the actual object of
inference \citep{janson2017model}.  For instance, model-X knockoffs
does not ask the statistician to write down a convenient linear model
or a generalized linear model---which may or may not hold at all---to
describe the relationship between $X$ and $Y$.

This paper builds upon knockoffs and generalizes it to a setting where
side information about the variables or factors under study happens to
be available.
		
\subsection{Related works}
Previous works on multiple testing with side information broadly fall
into two categories. The first essentially modifies the definition of
the FDR to account for what is known. For example, we can use side
information to weigh each hypothese---e.g.~such that a priori
promising hypotheses receive a higher weight---and thereafter consider
controlling a weighted version of the FDR instead of the original FDR
(see e.g.,
\citet{benjamini1997multiple,benjamini2007false,basu2018weighted}). The
other category of works keeps the original FDR as a target measure and
aims at using side information to improve the power of the selection
procedure. Such procedures are sometimes called structured multiple
testing procedures and the line of work includes
\citet{genovese2006false,ferkingstad2008unsupervised,roeder2009genome,ignatiadis2016data,lei2016power,lynch2017control,ignatiadis2017covariate,lei2018adapt,li2019multiple};
and \citet{tony2019covariate}, among others.
		
In this paper, we adopt the second perspective. Our work is most
notably inspired by AdaPT of \cite{lei2018adapt} in that we
incorporate the idea of adaptively using side information within the
knockoffs framework. In a nutshell, AdaPT assumes that we can compute
independent p-values, which are then compared against a sequence of
{\em adaptive} thresholds constructed using available side
information. A clever calculation then produces estimates of the FDR
if the analyst were to report those hypotheses below threshold. (The
procedure iteratively lowers these thresholds until the FDR estimate
is below a target.)  In this paper, we work with model-X knockoffs,
which is completely different, and use side information to adaptively
screen knockoff importance statistics instead. An appealing feature is
that FDR control is achieved under the same conditions as for
(vanilla) model-X knockoffs: we (only) ask for the knowledge of the
distribution $P_X$ of the covariates, which is resasonable in many
situations \citep{candes2018panning}.\footnote{It is worth mentioning
  that \citet{lei2018adapt} discuss in passing (Section 6.2) removing
  the independence assumption by constructing knockoff copies of the
  p-values. This is different from our methodology, since
  in order to construct knockoff copies, one would need to know the
  joint distribution of the p-values, which is quite restrictive in
  practice.}

		
The reader will correctly note that the role of side information in
our framework is similar to that of a prior in the Bayesian
framework. However, our perspective on side information is here
frequentist and, therefore, intrinsically different. Bayesian
inference is obtained by averaging over the prior distribution and the
validity of inference relies on the correctness of the prior (and the
Bayesian model). In our work, the inference results (e.g., FDR
control, statistical power) hold conditional on the side information
and most importantly, the correctness of the side information does not
affect the validity of inference. Having said this, we shall see that
our adaptive knockoff filter accomodates `Bayesian thinking' in the
sense that side information can be assimilated into a prior, which can
then be used by our method while retaining control of the FDR. This
type-1 error guarantee holds no matter the validity of the prior or
the quality of side information.

\section{A motivating example: discovering genes with side
  information}\label{sec:motivatingexample}

In a nutshell, the vanilla knockoffs procedure
\citep{barber2015controlling,candes2018panning} uses the data at hand
to construct negative controls, which are then used to rank hypotheses
from the least to most promising. Selection is then achieved by
applying a special step-up procedure to these ranked hypotheses; see
Figure \ref{fig:ordering_intro} for a visual illustration. Having
ordered the hypotheses, the knockoff filter sequentially examines the
hypotheses starting with the least promising (i.e.~starting from the
left on the figure). As in the Benjamini-Hochberg step-up procedure
\citep{storey2004strong}, at each step, the knockoffs filter estimates
the FDR among the unexamined hypotheses. This estimate is the ratio
between the number of remaining non-candidate hypotheses and that of
remaining candidates. If the estimated FDR falls below a
user-specified threshold $q$, the procedure stops and selects the
remaining candidates (we will see later how knockoffs classify
hypotheses as candidates or not). Clearly, a greater number of
candidates at the end of the ordering yields higher power. The catch
however is this: a crucial rule for FDR control is that we are not
allowed to use the status of any hypothesis---whether it is a
candidate or not---when determining the ordering of the hypotheses (as
we would otherwise put all the candidates at the end).  Now suppose we
have side information other than the data itself. If we can use it to
come up with a better ordering and place more candidates towards the
end, then we will have a chance to select more hypotheses and,
therefore, improve power.

\begin{figure}[ht]
  \begin{subfigure}{0.49\textwidth}
    \centering \includegraphics[width =
    1\textwidth]{./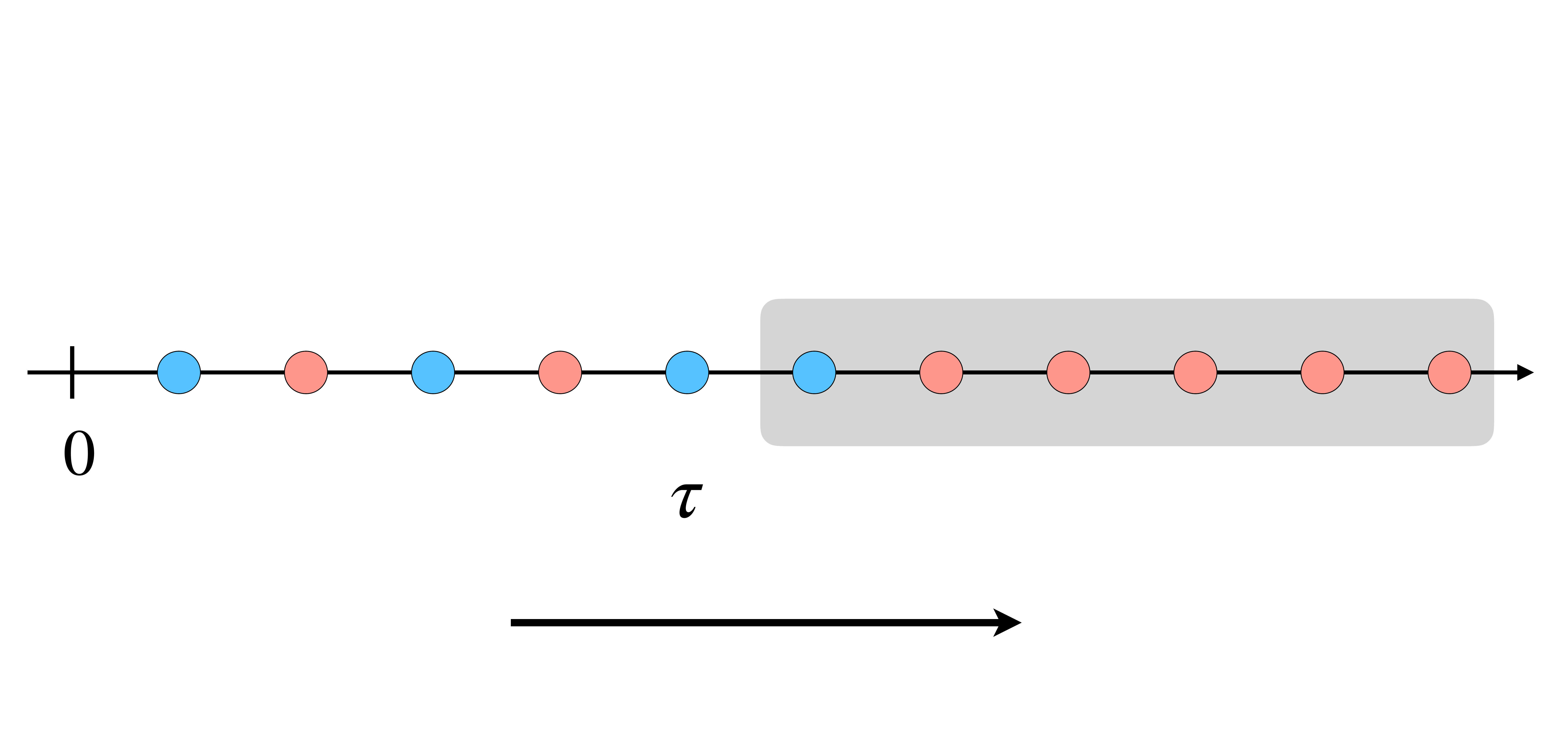}
    \caption{``Worse'' ordering.}
    \label{fig:intro_badord}
  \end{subfigure}
  \hfill
  \begin{subfigure}{0.49\textwidth}
    \centering \includegraphics[width =
    1\textwidth]{./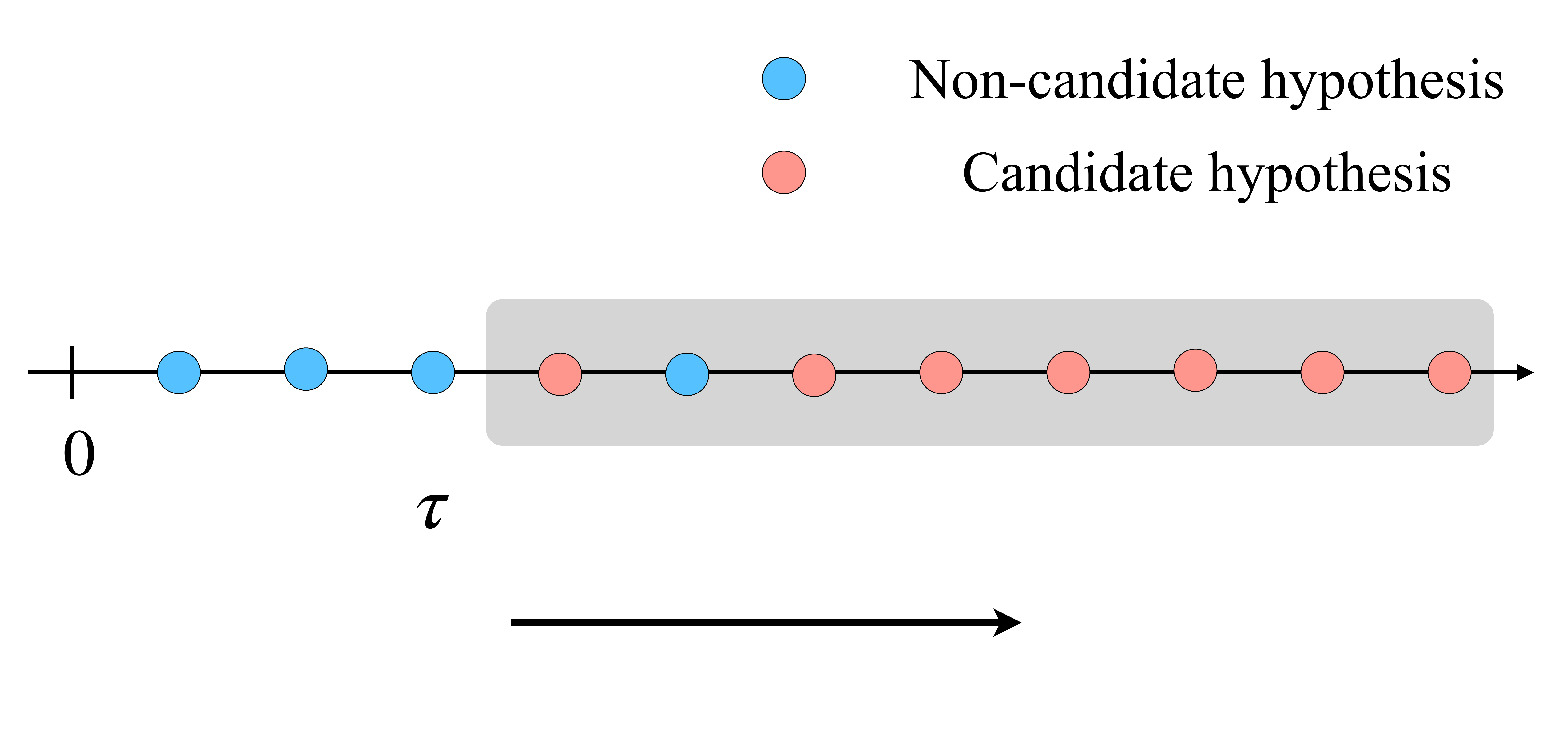}
    \caption{``Better'' ordering.}
  \end{subfigure}
\caption{Illustration of the step-up procedure with
                  different orderings. The nodes correspond to
                  hypotheses, and they are ordered from left to
                  right. The red nodes in Figure
                  \ref{fig:ordering_intro} represent the hypotheses
                  that are candidate for selection and the blue nodes
                  those that are not. (The notion of
                  ``candidate hypothesis'' is introduced later in
                  Section \ref{subsec:modelx}.)
                  The procedure operates sequentially, stops when the
                  ratio of the number of blue nodes to the number of
                  red nodes falls below a threshold $q$, and selects
                  the remaing red nodes.  With $q=0.2$, the ordering
                  on the left yields five discoveries while that on
                  the right yields seven (the stopping point is marked
                  by $\tau$).}
\label{fig:ordering_intro}
\end{figure}

While we shall explore how to design orderings that exploit side information
in Section \ref{subsec:description}, we first demonstrate how this can
be applied to a genome-wide association study (GWAS). We consider the
dataset provided by the Wellcome Trust Case Control Consortium
\citep{wellcome2007genome}, which contains genetic information on
$n = 4913$ British individuals, of which $1917$ have Crohn's disease
and $2996$ are healthy controls. For each individual, $p = 377,749$
single nucleotide polymorphisms (SNPs) are recorded. Our inferential
goal is to discover SNPs that are significantly associated with
Crohn's disease in the British population (i.e. to discover non-nulls)
by means of a procedure controlling the FDR below the threshold
$q=0.1$.

The WTCCC dataset has been studied in several works, see e.g.,
\citet{wellcome2007genome, candes2018panning, sesia2018gene}, with the
last two references using knockoff-based methods. We extend knockoffs
by leveraging summary statistics---p-values or z-scores corresponding
to marginal testing of each individual SNP---reported by genetic
studies of Crohn's disease in other populations. In this particular
example, we worked with summary statistics from GWAS in East Asia and
Belgium
\citep{franke2010genome,liu2015association,goyette2015high}.\footnote{The
  summary statistics are obtained from
  \url{https://www.ibdgenetics.org/downloads.html}.} {\em Since the summary statistics come from studies in other populations, note that we are not trying to re-discover SNPs that have been discovered before.}
	
The adaptive knockoff filter, or {\em adaptive knockoffs} for short,
uses both the WTCCC data and the summary statistics to order the
hypotheses. It then sequentially examines, stops and selects
hypotheses in pretty much the same way as we have seen before. Table
\ref{tab:discories} compares summary results on the WTCCC data, and we
can see that adaptive knockoffs discovers more SNPs than
other methods. Details including a full list of
discovered SNPs are available in Appendix \ref{appendix:snps}.

\begin{table}[h!]
  \centering
  \begin{tabular}{|c|c|}
    \hline
    Study/Method &  Number of SNPs discovered \footnotemark\\
    \hline
    \cite{wellcome2007genome} & 9 \\
    \hline
    \cite{candes2018panning} & 18\\
    \hline
    \cite{sesia2018gene} & 22.8 \\
    \hline
    \textbf{Adaptive knockoffs} & \textbf{33.3}\\
    \hline
  \end{tabular}
  \caption{Number of SNPs discovered to be associated with Crohn's
    disease by different methods. The target FDR level is $q=0.1$ in
    all cases (\citet{wellcome2007genome} considers the Bayesian
    FDR). Knockoff-based algorithms are randomized and, consequently,
    the reported numbers of discoveries are averaged over multiple
    realizations of the algorithm. In the case of adaptive knockoffs,
    the number of realizations is 50.}
  \label{tab:discories}
\end{table}

\paragraph{Inference is valid conditionally on the side information}
We wish to stress at the onset of this paper that adaptive knockoffs
controls finite-sample FDR regardless of the correctness of the side
information, i.e., regardless of the correctness of the summary
statistics in our example.  Even in the case where the side
information is plain wrong, we still achieve FDR control. When side
information is useful, power may be increased (as is the case above).
As we shall see, the reason is simple: FDR control and higher statistical
power both hold conditionally on the side information.

\section{Model-X knockoffs}\label{subsec:modelx}

Before presenting the details of adaptive knockoffs, we start by
giving a brief introduction to the model-X knockoffs framework. Assume
the covariates $X =(X_1,\ldots,X_p)$ follow a known joint distribution
$P_X$ and let $P_{Y|X}$ denote the conditional distribution of the
response $Y$ as before. The inferential goal is to test whether or not
$P_{Y|X}$ depends on $X_j$. It is shown in
\citet{edwards2012introduction} and \citet{candes2018panning} that under mild
conditions the above testing problem is equivalent to testing
\begin{align}\label{eqn:defnull}
  H_j: Y~\indep~ X_j|X_{-j},
\end{align}
where $X_{-j}\in\R^{p-1}$ is the vector $X$ after deleting
$X_j$. Hypothesis $j$ is called a null if $H_j$ is true and a non-null
otherwise. Hence, a variable is null if and only if it is independent
of the response given the knowledge of the others; throughout the
paper, we shall work with \eqref{eqn:defnull}.

The knockoffs procedure starts by computing a \emph{feature importance
  statistic} $W_j$ for each hypothesis $H_j$. Before constructing the
$W_j$'s, we first describe two key properties: (1) the null $W_j$'s
have equal probability of being positive or negative; (2) the signs of
the null $W_j$'s are mutually independent, and are independent of the
signs of the non-null $W_j$'s. Also, the feature importance statistics
are designed in such a way that the non-null $W_j$'s tend to take on
larger values. That said, we call $H_j$ a {\em non-candidate
  hypothesis} if $W_j<0$ and a {\em candidate hypothesis} if $W_j>0$
(as we have seen before, knockoffs only selects among the candidate
hypotheses).\footnote{The features with $W_j=0$ will never be selected
  or used by the procedure so we exclude them in the definitions.} The
vanilla knockoffs procedure then sorts the hypotheses by ordering the
magnitudes in a non-decreasing fashion,
$|W_{\pi_1}|\leq \ldots \leq |W_{\pi_k}|\leq \ldots|W_{\pi_p}|$, and
sequentially examines the hypotheses as follows: at each step
$k = 0, 1, 2, \ldots, p-1$, assume we select all remaining candidate
hypotheses $\pi_j$ for which $j > k$ and $W_{\pi_j} > 0$. Then the
number of false discoveries would be
$\#\{j:j>k,W_{\pi_j}>0,\pi_j\in\calH_0\}$. We do not have access to
this number since we do not know whether an hypothesis is null or not.
However, note that by symmetry of the null scores,
\[\#\{j:j>k,W_{\pi_j}>0,\pi_j\in \calH_0\}\approx
  \#\{j:j>k,W_{\pi_j}<0,\pi_j\in\calH_0 \}\leq
  \#\{j:j>k,W_{\pi_j}<0\}.
\]
Hence, the quantity
\begin{align}\label{efdp}
  \widehat{\fdr}_+(k) := \dfrac{1+\sum_{j > k} \indc{W_{\pi_j}<0}}{(\sum_{j > k}\indc{W_{\pi_j}>0})\vee 1}
\end{align}
may be regarded as a (conservative) estimate of the false discovery
proportion (FDP) among the unexamined
hypotheses. 
Set $[p] = \{1, \ldots, p\}$. Then the procedure is stopped at time
$T_+$, where
\begin{align}	
  T_+ : = \inf\{ k\in [p]: \widehat{\fdr}_+(k)\leq q\},
\end{align}
with the convention $\inf\varnothing= \infty$. The final selected set
is the family of remaining candidate hypotheses,
i.e.~$\hat{\calS} = \{\pi_j : j > T_+ , W_{\pi_j}>0\}$.
\citet{candes2018panning,barber2015controlling} established that this
procedure achieves FDR control at the nominal level
$q$. Alternatively, the quantity
\begin{align}\label{efdp0}
  \widehat{\fdr}_0(k): = \dfrac{\sum_{j > k} \indc{W_{\pi_j}<0}}{(\sum_{j >k}\indc{W_{\pi_j}>0})\vee 1},
\end{align}
is a slightly less conservative estimate of FDR. Replacing
$\widehat{\fdr}_+$ with $\widehat{\fdr}_0$ and replacing $T_+$ with
\begin{align}
  T_0 : = \inf\{ k\in[p] : \widehat{\fdr}_0(k)\leq q\},
\end{align}
yields control of a modified version of FDR defined as
\begin{align}
  \mathrm{mFDR}\, := \, \E \, \, \left[ \dfrac{|\hat{\calS}\cap \calH_0|}{|\hat{\calS}|+ q^{-1}}\right].
\end{align}
Figure \ref{fig:vkn_illustration} illustrates how the model-X knockoff
procedure orders, sequentially examines the hypotheses,
and stops when  $\widehat{\fdr}_0$ is below the pre-specified
threshold $q$.
		
\begin{figure}[ht]
  \begin{subfigure}{0.49\textwidth}
    \centering \includegraphics[width =
    1\textwidth]{./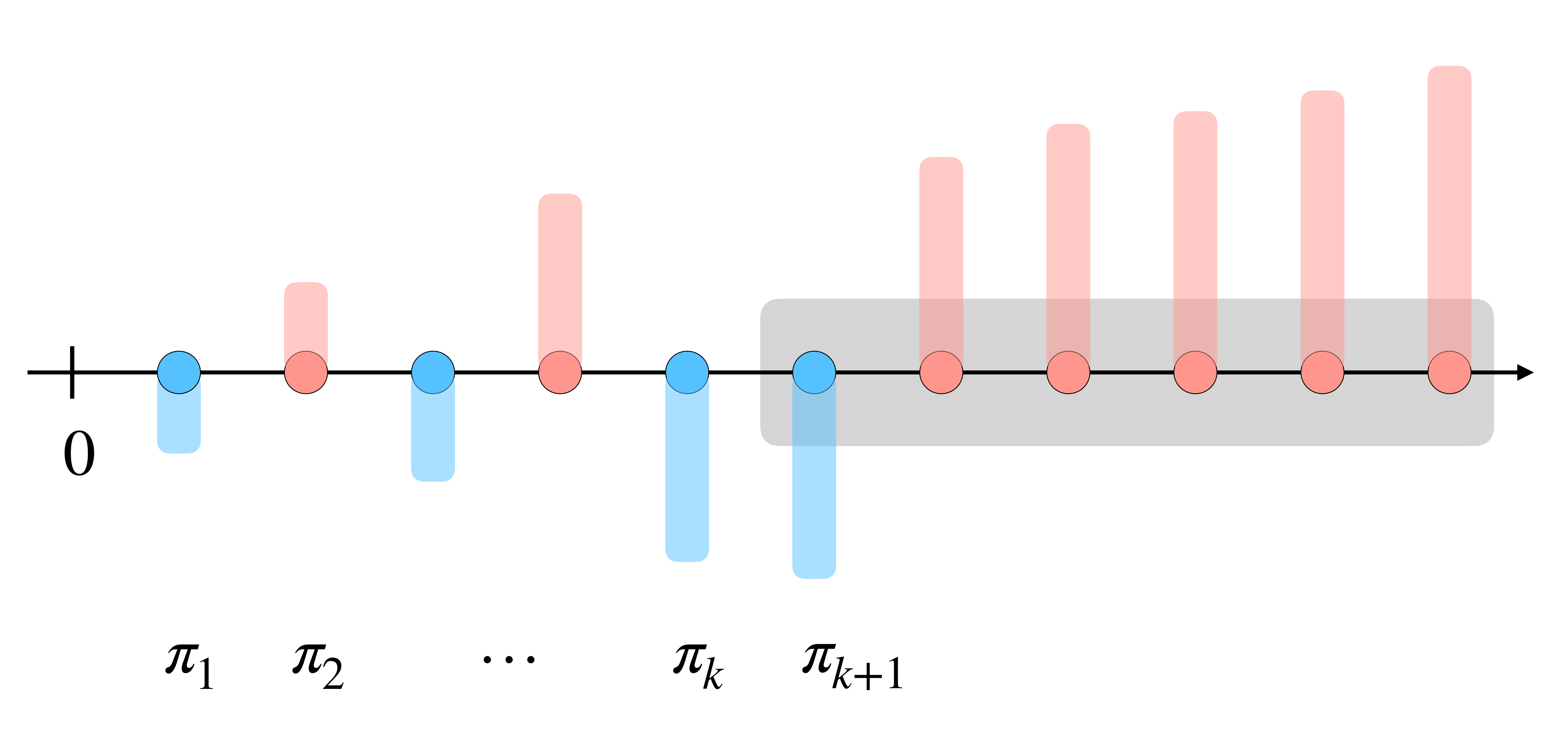}
    \caption{Model-X knockoffs.}
    \label{fig:vkn_illustration}
  \end{subfigure}
  \hfill
  \begin{subfigure}{0.49\textwidth}
    \centering \includegraphics[width =
    1\textwidth]{./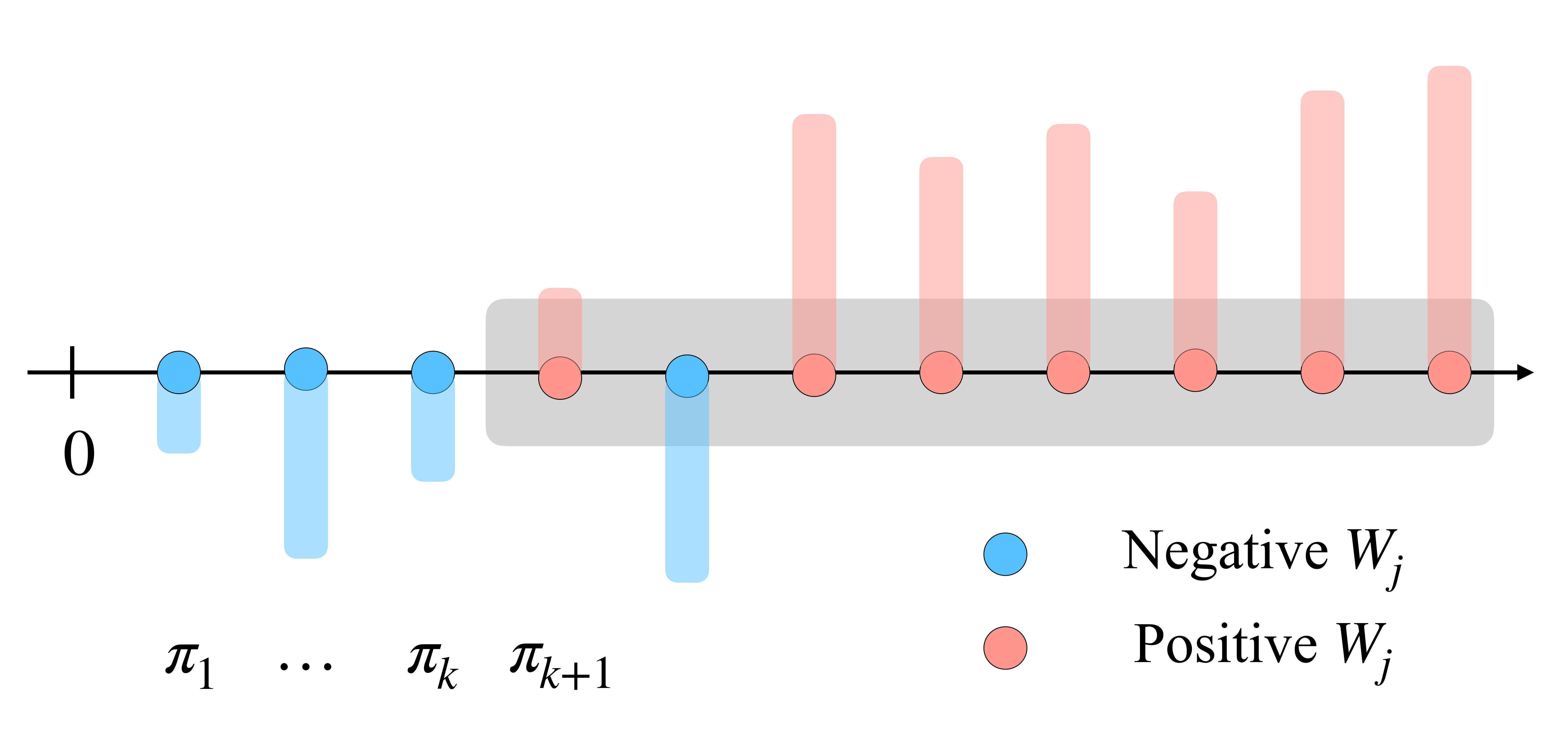}
    \caption{Adaptive knockoffs with side information.}
    \label{fig:akn_illustration}
  \end{subfigure}
  \caption{Illustration of knockoffs and adaptive knockoffs. The
    length of a bar represents the magnitude of a feature importance
    statistics $W_j$ whereas the color represents the sign. Red
    (resp.~blue) nodes and bars correspond to positive
    (resp.~negative) $W_j$'s. The target FDR level is $q=0.2$ and we
    use $\widehat{\fdr}_0$. (a) The ordering $\{\pi_k\}_{k\in [p]}$ is
    based on the magnitude of the feature importance statistics
    (standard procedure). The algorithm selects the last five
    hypotheses. (b) The ordering $\{\pi_k\}_{k\in [p]}$ is determined
    by both the side information and the magnitude of the feature
    importance statistics. The algorithm selects seven hypotheses.}
\end{figure}

We now briefly describe the computation of the feature importance
statistics.  Throughout the paper, assume we are given $n$
i.i.d.~samples from $P_{X} \cdot P_{Y|X}$.
For each sample $(X,Y)$, we augment the dataset by constructing a {\em
  knockoff copy}
$\tilde{X} = (\tilde{X}_1, \ldots, \tilde{X}_p) \in \R^p$ for
$X = (X_1, \ldots, X_p) \in \R^p$ (each original feature $X_j$ has a
knockoff copy $\tilde{X}_j$). This construction obeys two properties:
first, $\tilde{X} = (\tilde{X}_1, \ldots, \tilde{X}_p)$ is independent
of $Y$ conditional on $X$; second, the joint distribution
$(X,\tilde{X})$ remains invariant if we swap $X_j$ and $\tilde{X}_j$,
for any $j\in\calH_0$.  Formally,
$(X_j, \tilde{X}_j) | X_{-j}, \tilde{X}_{-j} \eqd (\tilde{X}_j, X_j) |
X_{-j}, \tilde{X}_{-j}$.
How to construct good knockoffs is an expanding area of research, see
e.g.~\cite{candes2018panning,sesia2018gene,gimenez2018knockoffs,liu2018auto,romano2019deep,SB-EC-LJ-WW:2019}. In
this paper, we will mainly be using the Gaussian
\citep{candes2018panning} and HMM knockoffs \citep{sesia2018gene}, and
point the reader to these references for details.

%

The knockoff variables should be thought of as some sort of negative
controls.  When the statistician wants to evaluate the effect of each
covariate on the response, she usually runs an algorithm on
$(\mathbf{X},\mathbf{Y})$---the covariate matrix and the response
vector---and obtains an importance score $Z_j$ for each feature
$j$. For example, $Z_j$ can be the magnitude of the Lasso coefficient
for $X_j$, with the value of the regularization parameter determined
by cross-validation. Now the knockoffs procedure asks our statistician
to run her algorithm on both the original {\em and} the knockoff
features.  She will now obtain two scores $Z_j$ and $\tZ_j$ for each
feature. In our previous example, the first is the magnitude of the
Lasso coefficient for $X_j$ and the second that for $\tX_j$ (we are
still free to determine the value of the regularization parameter by
cross-validation if we wish). She then combines these two scores into
a single one as follows:
\begin{align}
W_j = w_j(Z_j,\tZ_j);
\end{align}
here, $w_j$ is any anti-symmetric function she wants to use (e.g., $W_j = Z_j-\tZ_j$).\footnote{An
anti-symmetric function is a function such that
$f(u,v) = -f(v,u)$.} By construction, 
if $j$ is a null, $W_j$ has equal probability of being positive or
negative, whereas if $j$ is not null, we hope that $W_j$ tends to be
large and positive.
		
%

\section{The adaptive knockoff filter}\label{subsec:description}

As discussed before, the ordering $\{\pi_k\}_{k\in[p]}$ is a key
element in the knockoff procedure. If we know a priori that some
hypotheses are more likely to be non-nulls and move them towards the
end of the ordering, the procedure is more likely to select these
features. Now suppose side information associated with the features
under study is available. We would like to know
\begin{enumerate}[label = (\alph*)]
\item how we can effectively use the data and side information to
  construct an ordering that has higher density of non-nulls at the
  end (as to improve power), 
\item and what property should the ordering have so that the FDR
  remains controlled? 
\end{enumerate}
Informally, in order to keep FDR control, we require the ordering to
be independent of the signs of the statistics. Let $U_j\in \R^r$
denote the side information associated with feature $j$ and
$U = (U_1,\ldots,U_p)^T$. Let $V^+(k)$ (resp. $V^-(k)$) denote the
\emph{null} features with positive (resp. negative) test scores $W_j$ that have
not been examined up to and including step $k$. Define the filtration
$\{\calF_k\}_{k\geq 0}$, where $\calF_k$ is the $\sigma$-algebra
generated by the following elements:
\begin{itemize}
\item The magnitude of all the $W_j$'s: $\{|W_j|\}_{j\in[p]}$.
\item The signs of the examined $W_j$'s: $\{\sign(W_{\pi_j})\}_{j\leq k}$ (when $k = 0$, this is the empty set).
\item The signs of the non-null $W_j$'s :$\{\sign(W_j)\}_{j\in \calH_1}$.
\item The number of positive and negative null $W_j$'s in the
  unexamined hypotheses: $\{|V^+(j)|\}_{j\leq k}$ and
  $\{|V^-(j)|\}_{j\leq k}$.
\item Side information: $U$.
\end{itemize}

	

\begin{property}[Sign invariant property]\label{property:order}
An ordering $\{\pi_j\}_{j\in [p]}$ is called \textbf{sign invariant}
if for any $k\geq 0$, conditional on $\calF_{k}$ and $\pi_{k+1}\in
V^+(k)\cup V^-(k)$, the probability of $W_{\pi_{k+1}}>0$ is equal to
$|V^+(k)|/(|V^+(k)| + |V^-(k)|)$.

\end{property}

Algorithm \ref{algo.akn} presents the adaptive knockoffs procedure. At
each step $k=0, 1, 2, \ldots$, adaptive knockoffs uses a \emph{filter}
$\Phi_{k+1}$, required to be
$\calF_{k}$-measurable, 
to determine the least promising hypothesis among the
remaining ones.\footnote{When the filter $\Phi_k$ has extra
  randomness, we combine the extra randomness with the original side
  information and consider the augmented side information and the
  corresponding augmented $\sigma$-field $\tilde{\calF}_k$. By such
  treatment, $\Phi_k$ is measurable w.r.t. $\tilde{\calF}_{k-1}$.}
Under this condition, Proposition \ref{prop:filterproperty} shows that
the resulting ordering $\{\pi_k = \Phi_k\}_{k\in[p]}$ obeys Property
\ref{property:order}. Adaptive knockoffs otherwise adopts the same FDR
estimates as in \eqref{efdp} and \eqref{efdp0}, and is stopped the
first time the estimate falls below the target threshold.

		
\begin{proposition}\label{prop:filterproperty}
  Assume that conditional on the side information $U$, the null
  $W_j$'s have equal probability of being positive or negative, and
  that their signs are independent of each other and of those of the non-nulls. If for each $k\geq 0$, $\Phi_{k+1}$ is
  $\calF_{k}$-measurable, then the ordering $\pi_k = \Phi_k$ obeys
  Property \ref{property:order}. 
\end{proposition}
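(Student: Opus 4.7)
The plan is to prove an exchangeability lemma by induction on $k$ and then read off Property \ref{property:order} from it in one line. Let $N_k := \calH_0 \setminus \{\pi_1,\ldots,\pi_k\}$ denote the unexamined nulls at step $k$; since each $\pi_i$ is $\calF_{i-1}$-measurable, $N_k$ is $\calF_k$-measurable and $|N_k| = |V^+(k)|+|V^-(k)|$. The central claim is:
\[
\text{conditionally on } \calF_k,\ (\sign(W_j))_{j\in N_k} \text{ is uniform over sign patterns with } |V^+(k)| \text{ positives.}
\]
Given this, the proposition is immediate: since $\pi_{k+1}$ is $\calF_k$-measurable, on the event $\{\pi_{k+1}\in V^+(k)\cup V^-(k)\}$ it equals a specific $j_0\in N_k$, and marginalizing the uniform distribution to the coordinate $j_0$ yields $\p(W_{\pi_{k+1}}>0\mid \calF_k) = |V^+(k)|/(|V^+(k)|+|V^-(k)|)$.

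I would prove the central claim by induction on $k$. For the base case $k=0$, one reads the sign-symmetry hypothesis in the standard knockoffs sense---conditionally on $U$, the magnitudes $\{|W_j|\}$, and the non-null signs, the null signs are i.i.d.\ Rademacher---and further conditions on the sum $|V^+(0)|$ to obtain the uniform distribution over sign patterns with that count. For the inductive step, the extra information in $\calF_{k+1}$ over $\calF_k$ consists only of $\sign(W_{\pi_{k+1}})$ and the updated counts $|V^\pm(k+1)|$. If $\pi_{k+1}\in\calH_1$, this information is already in $\calF_k$ and the counts do not change, so the inductive hypothesis transfers unchanged. If $\pi_{k+1}\in\calH_0$, then $N_{k+1}=N_k\setminus\{\pi_{k+1}\}$, and restricting the uniform distribution on $N_k$ to configurations where $\pi_{k+1}$ takes its revealed sign, then projecting onto $N_{k+1}$, again yields a uniform distribution with the updated counts.

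The main obstacle is the null-index case of the inductive step, and it is precisely where $\calF_k$-measurability of $\Phi_{k+1}$ is essential: because $\pi_{k+1}$ is chosen without peeking at any null sign still hidden under $\calF_k$, revealing $\sign(W_{\pi_{k+1}})$ is tantamount to sampling one coordinate without replacement from the exchangeable ensemble supplied by the inductive hypothesis, and uniformity on the remaining coordinates is preserved. If $\Phi_{k+1}$ were allowed to depend on any hidden null sign, this symmetry would collapse. The remainder of the argument is routine bookkeeping: verifying $\calF_k$-measurability of $|V^\pm(k)|$, $N_k$, and $\pi_{k+1}$, and tracking the update $|V^+(k+1)| = |V^+(k)| - \indc{\pi_{k+1}\in\calH_0,\ W_{\pi_{k+1}}>0}$.
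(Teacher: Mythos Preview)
Your proposal is correct and follows essentially the same approach as the paper: both establish that, conditionally on $\calF_k$, the signs of the unexamined null $W_j$'s are exchangeable (uniform over sign patterns with $|V^+(k)|$ positives), and then read off Property~\ref{property:order} by $\calF_k$-measurability of $\pi_{k+1}$. The only difference is that the paper argues exchangeability in one stroke---i.i.d.\ Rademacher signs conditioned on their sum---whereas you unfold the same fact by induction on $k$, which is a more careful rendering of the same idea rather than a different route.
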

\begin{proof}
  By assumption, $\pi_{k+1}$ is measurable w.r.t.~$\calF_k$ and
  consequently $\{\pi_{k+1} \in V^+(k)\cup V^-(k)\}\subset
  \calF_k$. Apart from
  $(U,\{|W_j|\}_{j\in[p]},\{\sign(W_j)\}_{j\in\calH_1},\{\sign(W_{\pi_j})\}_{j\leq
    k})$, $\calF_k$ can only provide further information on the number
  of ``$+$''s and ``$-$''s in $V^+(k)\cup V^-(k)$;
  i.e.~$|V^\pm(k)|$. Since the signs of the nulls $W_j \neq 0$ are
  i.i.d.~coin flips conditional on
  $(U,\{|W_j|\}_{j\in[p]},\{\sign(W_j)\}_{j},\{\sign(W_{\pi_j})\}_{j\leq
    k})$, the probability of $W_{\pi_{k+1}}>0$ (resp.~$W_{\pi_{k+1}}<0$)
  is proportional to the number of ``$+$''s (resp.~``$-$''s), completing
  the proof. 
\end{proof}
We would like to remark that if each knockoff copy has the properties
that $(X,\tX)|U$ stays invariant after swapping $X_j$ and $\tX_j$ and
$\tX$ is independent of $Y$ conditional on $(X,U)$, then the $W_j$'s satisfy the conditions required in Proposition \ref{prop:filterproperty}.

		\begin{algorithm}[h!]
  \DontPrintSemicolon \SetAlgoLined \BlankLine
  \caption{Adaptive Knockoffs\label{algo.akn}}
  \textbf{Input:} Covariate matrix $\mathbf{X}\in \R^{n\times p}$; response
  variables $\mathbf{Y}\in \R^{n}$; side information $U\in \R^{p\times r}$;
  target FDR level $q$. \; \textbf{Initialization:} $k\leftarrow 0$;
  $\widehat{\fdr}$ is either $\widehat{\fdr}_0$ or
  $\widehat{\fdr}_+$.\; \While{$\widehat{\fdr}(k)>q$ \normalfont{and}
    $k<p$}{ 1. Use the filter $\Phi_{k+1}$ to determine the next
    hypothesis to examine $\pi_{k+1}$:
    \begin{align}
      \pi_{k+1} \leftarrow \Phi_{k+1}(\{|W_{j}|\}_{j\in[p]},\{W_{\pi_j}\}_{j>k},U).			
    \end{align}\;
    2. Update $k$: $k\leftarrow k+1.$\; }
  \textbf{Output}: Selected set $\hat{\calS}=\{j\in[p]:j>k, W_{\pi_j}>0 \}$.
\end{algorithm}
		
As a result of Proposition \ref{prop:filterproperty}, we show in
Theorem \ref{thm:fdr} that adaptive knockoffs controls the
finite-sample FDR.
\begin{theorem}\label{thm:fdr}
Under the conditions from Proposition \ref{prop:filterproperty},
when $\widehat{\fdr}_+$ is used, Algorithm \ref{algo.akn} controls
the FDR at the nominal level $q$; when $\widehat{\fdr}_0$ is used, it controls the
modified FDR at  level $q$.
\end{theorem}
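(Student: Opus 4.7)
The plan is to mirror the FDR proof for vanilla model-X knockoffs, using the sign-invariance of Property \ref{property:order} (supplied by Proposition \ref{prop:filterproperty}) in place of the fixed magnitude-ordering. The first step is to turn the stopping criterion into a pointwise bound on the FDP. At $\hat{T}$ we have $\widehat{\fdr}(\hat{T})\le q$ by definition; for the $\widehat{\fdr}_+$ version, the numerator $1+\sum_{j>\hat{T}}\indc{W_{\pi_j}<0}$ dominates $1+V^-(\hat{T})$ because that sum counts negative non-nulls in addition to negative nulls, and $|\hat{\calS}\cap\calH_0|=V^+(\hat{T})$, giving
\[
\mathrm{FDP}\;=\;\frac{V^+(\hat{T})}{|\hat{\calS}|\vee 1}\;\le\; q\cdot \frac{V^+(\hat{T})}{1+V^-(\hat{T})}.
\]
The $\widehat{\fdr}_0$ case reduces to the same right-hand side once one observes that $|\hat{\calS}|+q^{-1}\ge q^{-1}(1+V^-(\hat{T}))$ at $\hat{T}$. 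So in both cases it suffices to show $\E[V^+(\hat{T})/(1+V^-(\hat{T}))]\le 1$.

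For this expectation bound I would run a forward-time supermartingale argument on $M_k:=V^+(k)/(1+V^-(k))$ with respect to the filtration $\calF_k$ introduced just before Property \ref{property:order}. When $\pi_{k+1}\in\calH_1$, both $V^+$ and $V^-$ are unchanged and $M_{k+1}=M_k$ trivially; when $\pi_{k+1}\in\calH_0$, Property \ref{property:order} tells us that $\sign(W_{\pi_{k+1}})$, conditional on $\calF_k$, equals $+$ with probability $V^+(k)/(V^+(k)+V^-(k))$. A short algebraic expansion then gives $\E[M_{k+1}\mid \calF_k]=M_k$ when $V^-(k)\ge 1$, and the boundary case $V^-(k)=0$ only makes the conditional expectation smaller, so $(M_k)$ is an $\calF_k$-supermartingale. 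Since $\widehat{\fdr}(k)$ is $\calF_k$-measurable, $\hat{T}$ is a stopping time; combined with $\hat{T}\le p$ and the boundedness of $M_k$, optional stopping yields $\E[M_{\hat{T}}]\le \E[M_0]$.

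To close the argument, I would bound $\E[M_0]=\E[V^+(0)/(1+V^-(0))]$. Under the hypothesis of Proposition \ref{prop:filterproperty}, conditional on $\calF_0$ the null signs are i.i.d.~Bernoulli$(1/2)$, so $V^+(0)\sim\mathrm{Binom}(|\calH_0|,1/2)$ and $V^-(0)=|\calH_0|-V^+(0)$. The closed form $\E[1/(X+1)]=(2^{n+1}-1)/((n+1)2^n)$ for $X\sim\mathrm{Binom}(n,1/2)$, together with the symmetry $X\eqd n-X$, yields $\E[V^+(0)/(1+V^-(0))]=1-2^{-|\calH_0|}\le 1$. Combining the three displays gives $\fdr\le q(1-2^{-|\calH_0|})\le q$ for the $\widehat{\fdr}_+$ version and $\mathrm{mFDR}\le q$ for the $\widehat{\fdr}_0$ version.

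The main technical point, and the reason Proposition \ref{prop:filterproperty} is formulated as it is, will be to ensure that the adaptive choice of $\pi_{k+1}$ does not disturb the martingale identity. The $\calF_k$-measurability of $\Phi_{k+1}$ is exactly what puts the event $\{\pi_{k+1}\in\calH_0\}$ into $\calF_k$, which is what licenses decomposing $\E[M_{k+1}\mid \calF_k]$ into null and non-null branches and then invoking Property \ref{property:order} inside the null branch; a naive argument that treats $\pi_{k+1}$ as fixed ex ante would fail to justify this decomposition. A secondary, more clerical point is to verify that the estimators $\widehat{\fdr}_+(k)$ and $\widehat{\fdr}_0(k)$ involve only $\calF_k$-measurable quantities despite being written as sums over $j>k$, which is what makes $\hat{T}$ a legitimate stopping time for the supermartingale.
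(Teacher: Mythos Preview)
Your proposal is correct and follows essentially the same route as the paper: bound the FDP at the stopping time by $q\cdot V^+(\hat T)/(1+V^-(\hat T))$, show that $M_k=V^+(k)/(1+V^-(k))$ is a supermartingale with respect to $\{\calF_k\}$ via Property~\ref{property:order}, and apply optional stopping. The paper compresses the supermartingale verification and the bound $\E[M_0]\le 1$ into a citation to \citet[Section~A.1]{barber2015controlling}, while you spell out the martingale identity and the binomial computation $\E[M_0]=1-2^{-|\calH_0|}$ explicitly; apart from this level of detail the arguments coincide.
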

		
This result is a generalization of the condition for FDR control in
the knockoffs framework presented in \citet{barber2015controlling} and
\citet{candes2018panning}. The vanilla knockoff filter, which only uses
the magnitude of feature importance statistics to determine the order,
can be viewed as
a special case of adaptive knockoffs: in this case, 
\begin{align}
  \Phi_{k+1} = \argmin{j>k }~|W_{\pi_j}|,
\end{align}
which is clearly $\calF_{k}$-measurable.
			
\paragraph{Proof of Theorem \ref{thm:fdr}} When $\widehat{\fdr}_+$ is
used,
\begin{align}
\fdr = \E \left[\dfrac{|\hat{\calS}\cap \calH_0|}{|\hat{\calS}|}\right] = \E \left[\dfrac{|V^+(T_+)|}{|\hat{S}|}\right]&\leq  \E \left[\dfrac{|V^+(T_+)|}{|V^-(T_+)|+1}\widehat{\fdr}_+(T_+)\right]\\
& \leq q\E\left[\dfrac{|V^-(T_+)|}{|V^+(T_+)|+1}\right]\\
&\leq q.
\end{align}
The second inequality holds by  definition of $T_+$ and the third
inequality follows from the fact that $\dfrac{|V^-(k)|}{|V^+(k)|+1}$ is a
supermartingale and $T_+$  a stopping time w.r.t.~the filtration
$\{\calF_k\}_{k\geq 0}$. The supermartingale argument follows directly from Proposition
\ref{prop:filterproperty} and \citet[Section A.1]{barber2015controlling}. The
proof of mFDR control is exactly the same as in
\citet[Section A.2]{barber2015controlling}. 
			


\section{Two classes of filters}\label{subsec:filters}
We now focus on constructing a filter that satisfies Property
\ref{property:order} and also systematically uses all the available
information to determine the ordering of hypotheses. At each step $k$,
the filter determines the ``least promising'' hypothesis among the
unexamined hypotheses based on the information in $\calF_k$. We
present two types of filters that quantify ``least promising'' in
different ways. We emphasize that the model we choose does not affect
the FDR control as long as Property \ref{prop:filterproperty} is
satisfied, and researchers are free to come up with other types of
models. In the following, we refer to this situation with the slogan:
``Wrong models do not hurt FDR control!'' We also assume we work with
standardized side information $U_j$'s, which means that the $U_j$'s
have the same dimension and units.

\subsection{Predictive modeling}

At step $k$, we estimate the probability that the sign of a feature
importance statistic is negative conditional on
$\calF_k$. Specifically, we let $s_j = \sign(W_j)$ and compute an
estimate of $\p(s_j = -1|\calF_k)$ for each remaining feature. This
estimation (or prediction) task can be handled by various machine
learning algorithms. We treat $\{s_j\}^p_{j=1}$ as the binary
responses and the magnitude $\{|W_j|\}^p_{j=1}$ and side information
$\{U_j\}_{j=1}^p$ (e.g., $U_j$ is the prior rank of $H_j$) as
predictors. We consider the model
\begin{align}
  g(\p(s_j = 1||W_j|,U_j)) 	= h(|W_j|,U_j),
\end{align}
where $g(x) = \log(x/(1-x))$ is the link function\footnote{In the case
  where $W_j$ can also be $0$, we can alternatively use a multinomial
  model with levels $\{-1,0,1\}$.} and $h(\cdot, \cdot)$ is a
regression function. If we postulate a logistic model, 
\begin{align}
  h(|W_j|,U_j) =  \beta_0+\beta_1|W_j|+\beta_2^TU_j.
\end{align}
For a generalized additive model (GAM),
\citep{hastie2017generalized},
\begin{align}
  h(|W_j|,U_j) = \beta_0+ h_0(|W_j|)+h_1(U_{j1})+\ldots+h_r(U_{jr}),
\end{align} 
where $h_0,h_1,\ldots,h_r$ are smooth functions from $\mathbb{R}$ to
$\mathbb{R}$.  The function $h$ can also be modeled via random forests
\citep{breiman2001random}.

We use
$(\{s_{\pi_j}\}_{j\leq k}, \{U_j\}_{j\in [p]},\{|W_j|\}_{j\in [p]})$
as training data to fit the chosen model, and the fitted function
$\hat{h}$ for predicting the signs of statistics among the unexamined
hypotheses. For $j> k$, set
\begin{align}
  \hat{\p}(s_{\pi_j} = -1||W_{\pi_j}|,U_{\pi_j}) = g^{-1}\circ \hat{h}(|W_{\pi_j}|,U_{\pi_j}),
\end{align}
and 
\begin{align}
  \Phi_{k+1} = \argmax{j>k}~ g^{-1} \circ
  \hat{h}(|W_{\pi_j}|,U_{\pi_j}) =  \argmax{j>k}~ 
  \hat{h}(|W_{\pi_j}|,U_{\pi_j})
\end{align}
since $g$ is monotone. 
By construction, $\Phi_{k+1}$ is $\calF_{k}$-measurable.
			
\subsection{Bayesian modeling}\label{sec:bayesfilter}
An alternative perspective, which has the benefit of allowing for a
careful modeling of the effect of side information, is of a Bayesian
nature. That said, we are not imposing any assumption on the data
generating mechanism. We are simply using Bayesian thinking for
calculating the probability of a feature being non-null, and whether
the Bayesian beliefs about features are true or not does not hurt
FDR control. A belief closer to the truth will yield higher power in
detecting the non-nulls.

\paragraph{The model} 
The Bayesian-oriented filter is similar to the treatment in
\citet{lei2018adapt}, but we consider it for knockoffs. Let $H_j$
denote whether or not feature $j$ is a null: $H_j=1$ means feature $j$
is a non-null and $H_j=0$ means it is a null. We follow the Bayesian
two-group model and write
\begin{align}
  H_j|U_j \overset{i.i.d.}{\sim} \Bern(\nu(U_j)),
\end{align}
where $\nu$ is a link function. Marginally,
\begin{align}
  W_j|H_j,U_j \sim \begin{cases}
    \calP_1(W_j|U_j) & \text{if }H_j = 1,\\
    \calP_0(W_j|U_j) & \text{if }H_j = 0.
  \end{cases}
\end{align}
Above, $\calP_{H_j}(\cdot|U_j)$ denotes the law of $W_j$ conditional
on $U_j$ when $H_j\in\{0,1\}$. Under this model, we can quantify the
possibility of a feature being null by inspecting the posterior
probability $\p(H_j=0||W_j|,U_j)$.
At each step $k$, the posterior probability can be used as a criterion
to determine the next hypothesis in the ordering, i.e.,\footnote{In
  implementation, we instead use
  $1- \p(H_j=1,\sign(W_j)>0||W_j|,U_j)$.}
\begin{align}
  \Phi_{k+1} = \argmax{j>k}\quad  \p(H_{\pi_j}=0||W_{\pi_j}|,U_{\pi_j}).
\end{align}
The remaining task is to model $\calP_0,\calP_1$ and $\nu$. 
Assuming $W_j$
has a distribution with a point mass at $0$, we model the conditional
law of $W_j$ via
\begin{align}
  p_h(w|u) = \delta_h\indc{w=0}+(1-\delta_h)\indc{w\neq 0}\dfrac{\beta_h(u) \exp(\beta_h(u)w)}{(1+\exp(w))^{\beta_h(u)+1}},\qquad h = 0,1.
\end{align}
The continuous part of the distribution is somewhat arbitrary, and we
choose this form for computational convenience. Under this model, 
\begin{align}\label{eq:condexp}
& \E[H|U] = \nu(U),\\
& \E[Y|W\neq 0,U,H=h] =1/\beta_h(U),\qquad h=0,1,
\end{align}
where $Y_j = \log(1+\exp(W_j))-W_j$. Then estimating
$(\nu(U),\beta_0(U),\beta_1(U))$ boils down to estimating the above
conditional expectations. 


\paragraph{GLM-based approach} 
Let $\calN$ (resp.~$\calB$) denote the class of functions $\nu(\cdot)$
(resp.~$\beta_0(\cdot)$, $\beta_1(\cdot)$) belongs to.  For example,
assuming a logistic model, we have
\begin{align}\label{eq:modelnu}
\calN = \{\nu(x): \nu(x)= 1/(1+\exp(-\theta^T x)),~ \theta\in\R^d \}
\end{align}
while a model for $\calB$ might be
\begin{align}\label{eq:modelbeta}
\calB = \{\beta(x):\beta(x) = \exp(\theta^Tx),~\theta\in\R^d\}.
\end{align}
The log-likelihood function (under independence) of
$\{(H_j,W_j)\}_{j\in [p]}$ conditional on $\{U_j\}_{j\in [p]}$ is
given by {\small
\begin{align}
\ell(\{H_j,W_j\}_{j\in [p]}|\{U_j\}_{j\in [p]};\delta_0,\delta_1,\nu(\cdot),\beta_0(\cdot),\beta_1(\cdot)) = & \sum^p_{j=1}[ \underbrace{(i) + (ii)}_{\text{group 1}}+\underbrace{(iii)+(iv)+(v)}_{\text{group 2}}]+C,
\end{align}	
}where $C$ represents the terms not containing the parameters and
group $1$ includes
\begin{align}
\begin{cases}
&(i) =(1-H_j)\indc{W_j =0}\log(\delta_0) +(1-H_j)\indc{W_j\neq 0}\log(1-\delta_0),\\
&(ii)  = H_j\indc{W_j =0}\log(\delta_1) +H_j\indc{W_j\neq 0}\log(1-\delta_1).			 
\end{cases}
\end{align}
Group $2$ comprises
\begin{align}
\begin{cases}
&(iii)  = H_j\log(\nu(U_j)) + (1-H_j) \log(1-\nu(U_j)),\\
&(iv) =  (1-H_j)\indc{W_j\neq 0}(\log(\beta_0( U_j))+\beta_0(U_j)\log(\exp(W_j)/(1+\exp(W_j))),\\ 
&(v)  = H_j\indc{W_j\neq 0}(\log(\beta_1(U_j))+ \beta_1(U_j)\log(\exp(W_j)/(1+\exp(W_j))).
\end{cases}
\end{align}
In the case where $\nu(\cdot),\beta_0(\cdot),\beta_1(\cdot)$ are
classes of parametric functions as above, we hope to obtain the
maximum likelihood estimator (MLE) by optimizing the log-likelihood
function:
\begin{align}
(\hat{\delta}_0,\hat{\delta}_1,\hat{\nu}(\cdot),\hat{\beta}_0(\cdot),\hat{\beta}_1(\cdot)) = \argmax{\substack{\delta_0,\delta_1,\\\nu(\cdot)\in\calN,\\\beta_0(\cdot),\beta_1(\cdot)\in\calB}} ~\sum^p_{j=1}[(i)+(ii)+(iii)+(iv)+(v)].
\end{align}
Note that at step $k$, the information we can use to
estimate the parameters is limited: some of the signs
of the $W_j$'s are not available and the $H_j$'s are
unobserved.

Directly optimizing the log-likelihood function is not
feasible. Instead, we use the expectation-maximization (EM) algorithm
to obtain the MLE. Our plan is this: at step $k$ of the adaptive
knockoffs algorithm, we run the EM algorithm for $S$ iterations and
obtain an estimate of the parameters of interest. (To be clear, one
iteration of the EM algorithm consists of an E-step and an M-step.) At
step $s$ of the EM algorithm, denote by $\calG$ the $\sigma$-field
generated by the available information. For the E-step we
need to compute the following conditional expectations:
\begin{align}
\E[H_j|\calG],~\E[Y_jH_j|\calG],~\E[Y_j(1-H_j)|\calG].
\end{align}
We defer the calculation of the above quantities to Appendix
\ref{appendix:supp} and set $\overbar{H}_j = \E[H_j|\calG]$. For the
M-step, we decompose the optimization into two subgroups. The
optimization problems in group 1 have analytical solutions, namely, 
\begin{align}\label{opti:analytical}
&\hat{\delta}_0  = \argmax{\delta_0}~\sum^p_{j=1}(i) =  \dfrac{\sum_{j\in \calH}(1-\overbar{H}_j)\indc{W_j = 0} }{\sum_{j\in \calH} (1-\overbar{H}_j)},\\
&\hat{\delta}_1 = \argmax{\delta_1}~\sum^p_{j=1}(ii) =  \dfrac{\sum_{j\in \calH}\overbar{H}_j\indc{W_j = 0} }{\sum_{j\in \calH} \overbar{H}_j}.
\end{align}
The optimization problems in group 2 update 
$(\nu(\cdot),\beta_0(\cdot),\beta_1(\cdot))$. Since the optimization problem is
separable, we can solve the three subproblems independently. 
\begin{align}\label{opti:not}
	\hat{\nu}(\cdot) = \argmax{\nu(\cdot)\in\calN}~\sum^p_{j=1}(iii), ~~\hat{\beta}_{0}(\cdot) = \argmax{\beta_0(\cdot)\in\calB}~\sum^p_{j=1}(iv), ~~\hat{\beta}_{1}(\cdot) = \argmax{\beta_1(\cdot)\in\calB}~\sum^p_{j=1}(v).
\end{align}
These three subproblems directly depends on $\calN$ and $\calB$. When
the parametric model as in \eqref{eq:modelnu} and \eqref{eq:modelbeta}
is used, the above optimization problems correspond to three
(weighted) GLMs respectively and can be solved by standard R packages
(e.g., \textsf{glm}). 

\paragraph{GLM-extension approach}
Another possibility is to work with regularized log-likelihood
  functions. For instance, we may add an $\ell_1$ penalty about the
  coefficients $\theta$ in \eqref{eq:modelbeta}, and use the
  \textsf{glmnet} package to solve the corresponding optimization
  problem. We can also fit a generalized additive model by for $\beta_0(\cdot)$ solving the
  following penalized optimization problem \citep[Chapter
  9]{hastie2009elements}:
  \begin{align}
\max_{\beta_0(\cdot)\in\calB}~~ \sum^p_j (iv)-\sum_{\ell=1}^r\lambda_\ell\int \beta_{0,\ell}''(x_\ell)^2\mathrm{d}x_\ell,
\end{align}
in which
$\calB = \{\beta(x_1, \ldots, x_r) : \beta(x_1, \ldots, x_r) =
\sum^r_{\ell=1}\beta_\ell(x_\ell) ,~\beta_\ell''(\cdot) \text{ exists
  for all }\ell\in[r]\}$.
Above, the nonnegative hyper-parameters $\{\lambda_\ell\}_{\ell\in[r]}$ can be
chosen via Generalized Cross Validation (GCV). The R package
\textsf{gam} or \textsf{mgcv} are designed to find solutions to such
problems.

\paragraph{Nonparametric regression approach}
We consider a variation that does not fall in the EM framework but
allows us to make use of flexible regression tools.  Recall
\eqref{eq:condexp}, which states that
$(\nu(\cdot),\beta_0(\cdot),\beta_1(\cdot))$ are functions of the
conditional expectations. We thus directly estimate the conditional
expectation instead of solving the optimization problems in group
2. For example, we can use non-parametric methods, e.g., a random
forest, to directly fit the conditional expectations and let the
fitted values be the updated parameters. This is not an M-step because
we are no longer optimizing the (expected) likelihood. (This is not a
concern since FDR control always holds.) Such a variation opens the
door to modern regression methods and often works well in practice as
we shall see later.

\paragraph{Default implementation} 
The methods we have presented differ in the way they estimate
$\nu(\cdot)$, $\beta_0(\cdot)$ and $\beta_1(\cdot))$. When the side
information is a scalar, the default implementation combines the
GLM-extension and nonparamatric regression approaches. In details, we
fit $\beta_0(\cdot),\beta_1(\cdot)$ via the \textsf{gam} package in R
whereas for $\nu$, we regress $\log(\overbar{H}_j/(1-\overbar{H}_j))$ on $U_j$
via a GAM and then transform the fit to produce
$\hat{\nu}(\cdot)$. When the dimension is higher, the default
implementation is the nonparametric regression approach with a random
forest. The default number of iterations $S$ is set to be one.

\paragraph{Initialization}\label{par:initialization} At the
beginning of Algorithm \ref{algo.akn}, we reveal a fraction (by
default $10\%$) of the hypotheses based \emph{only} on the magnitude
of the statistics $|W_j|$ corresponding to the lowest values. Denote
the revealed statistics by $W_{\reveal}$. The adaptive filter then
starts with rough guesses
$(\hat{\beta}_0(\cdot),\hat{\beta}_1(\cdot),\hat{\nu}(\cdot),\hat{\delta}_0,\hat{\delta}_1)$
computed from available information. Specifically, we initialize
$\hat{\nu}(\cdot)$ with a constant function set to
${|\{j:W_j>0\}|}/{p}$ (we can think of a $|\{j:W_j>0\}|$ as a very
liberal estimate of the number of non-nulls).  Further, we set
\[
\hat{\delta}_0 = \frac{|\{j:W_j\leq 0\}|}{p} \frac{|\{j:W_j=0\}|}{p},
\quad 
\hat{\delta}_1 = \frac{|\{j:W_j> 0\}|}{p} \frac{|\{j:W_j=0\}|}{p}.
\]
Finally, the initial values of $(\hat{\beta}_0,\hat{\beta}_1)$ are
given by
\begin{align}
\hat{\beta}_{0}(U_j) &= \hat{\beta}_{1}(U_j) = 1/\log(2),~~\text{if }W_j=0.\\
\hat{\beta}_{0}(U_j) &= 1/[\log(1+\exp(\overbar{W}_\reveal^-))-\overbar{W}_\reveal^-],\\
 \hat{\beta}_{1}(U_j) &= 1/[\log(1+\exp(\overbar{W}_\reveal^+))-\overbar{W}_\reveal^+],~~\text{if }W_j\neq 0.
\end{align}
Above, $\overbar{W}_\reveal^-$ is the average of the negative items in
$W_{\reveal}$ and $\overbar{W}_\reveal^+$ is the average of the
positive items in $W_{\reveal}$. That is, we approximate
$\beta_0(U_j)$ (resp.~$\beta_1(U_j)$) with $1/Y_j$, in which we impute
nonzero $W_j$'s with the average of the negative (resp.~positive)
items in $W_{\reveal}$. 

In the subsequent steps of the filter, the initial value of the tuple
$(\hat{\beta}_0,\hat{\beta}_1,\hat{\nu},\hat{\delta}_0,\hat{\delta}_1)$
in Algorithm \ref{algo.em} is the output from the previous
iteration. The complete procedure is described in Algorithm
\ref{algo.em}.

\begin{algorithm}[h!]
	\DontPrintSemicolon  
	\SetAlgoLined
	\BlankLine
	\caption{EM algorithm\label{algo.em} to estimate
          $p_0,p_1,\nu$}
	\textbf{Input:}
        Information $\calF_{k}$ at step $k$. 
      
        \textbf{Initialization:} initialize ($\hat{\beta}_0$,
        $\hat{\beta}_1$, $\hat{\nu}$, $\hat{\delta}_0$
        $\hat{\delta}_1$) as in Section \ref{par:initialization} and
        set
        $\calG \leftarrow
        \sigma(\calF_{k},\hat{\beta}_0,\hat{\beta}_1,\hat{\nu},\hat{\delta}_0,\hat{\delta}_1)$.
        
        \For{$s\leftarrow 0,\ldots,S-1$}{
          1. \textbf{E-step: }\\
          ~~Update $\overbar{H}_j: ~ \overbar{H}_j \leftarrow \E[H_j|\calG],\hfill j\in[p].$\\
          ~~Update $(\overbar{Y}_{0,j},\overbar{Y}_{1,j}): ~ \overbar{Y}_{h,j}\leftarrow \dfrac{\E[Y_j(hH_j+(1-h)(1-H_j))|\calG]}{h\overbar{H}_j+(1-h)(1-\overbar{H}_j)},\hfill h=0,1,$\\
          where the calculations of the conditional expectations are
          presented in Appendix \ref{algo.em}.\;
          2. \textbf{M-step:}\\
          ~~Update $(\hat{\delta}_0,\hat{\delta}_1): ~  \hat{\delta}_h \leftarrow \dfrac{\sum_{j\in \calH}(h\overbar{H}_j+(1-h)(1-\overbar{H}_j))\indc{W_j = 0} }{\sum_{j\in \calH} (h\overbar{H}_j+(1-h)(1-\overbar{H}_j))},\hfill h=0,1.$\\
          ~~Update
          $\hat{\nu}: ~ \hat{\nu} \leftarrow \text{random
            forest}(\overbar{H}_j\sim
          U_j)$.\\
          ~~Update $(\hat{\beta}_0,\hat{\beta}_1): ~ 1/\hat{\beta}_{h} \leftarrow \text{random forest}(\overbar{Y}_{h,j}|{W_j\neq 0}\sim U_j),\hfill h=0,1$.\\
          3. \textbf{Update current information:}
          $\calG\leftarrow
          \sigma(\calF_k,\hat{\beta}_0,\hat{\beta}_1,\hat{\nu},\hat{\delta}_0,\hat{\delta}_1)$.
        } \textbf{Output:
          $\hat{\delta}_0,\hat{\delta}_1,\hat{\nu},\hat{\beta}_0,\hat{\beta}_1$}.
\end{algorithm}


      
\section{Numerical results}
\subsection{General setting}\label{sec:generalsetting}

To evaluate the performance of adaptive knockoffs, we present two
numerical experiments with different types of side information. In
each setting, we compare adaptive knockoffs with other multiple
testing methods. Table \ref{table:candidates} lists all the candidate
methods and their properties, i.e., whether or not they depend on
p-values and whether or not they utilize side information. In our
experiments all the p-values are obtained from multivariate linear
regression.  Storey-BH is implemented with a threshold set to
$\tau = 0.5$.  The parameter of SABHA follows \citet{li2019multiple}
with $\epsilon = 0.1$ and $\tau = 0.5$. For Adaptive SeqStep, the
threshold $\lambda$ is set to be $0.5$ as in \citet{lei2016power}.\footnote{The code for implementing BH, Storey-BH, SABHA and Adaptive SeqStep is adapted from \url{https://www.stat.uchicago.edu/~rina/sabha/All_q_est_functions.R} and \url{https://github.com/lihualei71/adaptPaper/blob/master/R/other_methods.R}.}
For AdaPT, we follow the setup introduced in
\url{https://cran.r-project.org/web/packages/adaptMT/vignettes/adapt_demo.html}.The
knockoff-based algorithms in Table \ref{table:candidates} use the LCD
feature importance statistics as introduced in
\citet{candes2018panning} and $\widehat{\fdr}_+$ as the estimated FDR.

For both experiments, we run algorithms with target FDR levels $\{0.03,0.06,\ldots,0.3\}$ and compare the corresponding statistical power and realized FDR. All the presented results are averaged over $100$ trials. The simulation results can be reproduced with the code provided at \url{https://github.com/zhimeir/adaptive_knockoff_paper}.
\begin{table}[h!]
\centering
\begin{tabular}{|c|c|c|c|}
\hline
Method & Abbreviation & \makecell{P-value\\free?} & \makecell{Use side\\ information?}\\
\hline
Benjamini Hochberg & BHq &  & \\
\hline
Storey's BH & StoreyBH &  & \\
\hline
Adaptive SeqStep & AdaSeqStep &  & $\checkmark$\\
\hline
AdaPT & AdaPT & & $\checkmark$ \\
\hline
Structure Adaptive BH algorithm & SABHA &  & $\checkmark$\\
\hline
Vanilla Model-X knockoffs & Vanilla Knockoff &$\checkmark$ &\\
\hline
Adaptive knockoffs w/ GLM filter & AdaKn(GLM) &$\checkmark$ &$\checkmark$ \\
\hline
Adaptive knockoffs w/ GAM filter & AdaKn(GAM) &$\checkmark$ &$\checkmark$ \\
\hline
Adaptive knockoffs w/ Random Forest filter & AdaKn(RF)& $\checkmark$&$\checkmark$ \\
\hline
Adaptive knockoffs w/ two group model & AdaKn(EM)& $\checkmark$&$\checkmark$ \\
\hline
\end{tabular}
\caption{Candidate multiple testing methods and their properties.}
\label{table:candidates}
\end{table}

\subsection{Simulation 1: one-dimensional side information}\label{sec:simul1}
The simulated dataset is of size $n = 1000$ and
$p=900$. Conditional on $X$, $Y$ is generated from a linear model 
\begin{align}\label{eqn:linearmodel}
Y | X_1, \ldots, X_p \sim \mathcal{N}(\beta_1 X_1 + \ldots \beta_p
  X_p, 1). 
\end{align}
The covariates $X$ are drawn from an HMM, whose parameters follow the
instructions found at
\url{https://msesia.github.io/snpknock/articles/SNPknock.html}. Researchers
can reproduce our choices by following the link from Section
\ref{sec:generalsetting}.  In this setting, our inferential goal is to
test whether or not $\beta_j = 0$.

We specify the model by constructing a sparse regression sequence
$\beta$---fixed throughout, i.e. through the $100$ trials so that the
data distribution $P_{XY}$ does not change---as follows: we randomly
choose 150 features among the first 300 as signals in such a way that
the larger the index, the less likely it is to be selected.\footnote{ 
We draw i.i.d.~samples from a distribution supported on
$\{1,2,\ldots,300\}$ such that $j$ is selected with probability
proportional to $\frac{1}{j^2}$ until we obtain $150$
distinct realizations. }
The setting is motivated by the fact
that in many real applications, researchers have access to prior
knowledge about the hypotheses, which allows them to rank the hypotheses by
their chance of being of interest. For each signal $X_j$, we set
$\beta_j = \pm 3.5/\sqrt{n}$, where the signs are determined by
independent coin flips (the features not in the model have
$\beta_j = 0$).  Figure \ref{fig:sim1_str} shows the realized
configuration of the signals (the variables with nonzero regression
coefficients). The side information is the index of the features; that
is, $U_j = j$ for $j\in[p]$.

\begin{figure}[h!]
\centering
\begin{subfigure}{0.8\textwidth}
\centering
\includegraphics[width = 0.6\textwidth]{./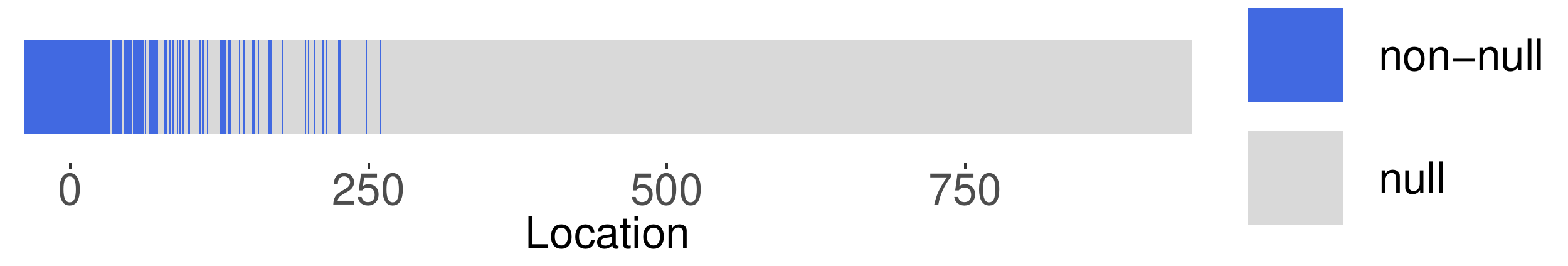}
\caption{Hypothesis structure.}
\label{fig:sim1_str}
\end{subfigure}
\\	\begin{subfigure}{0.8\textwidth}
\centering
\includegraphics[width = 0.6\textwidth]{./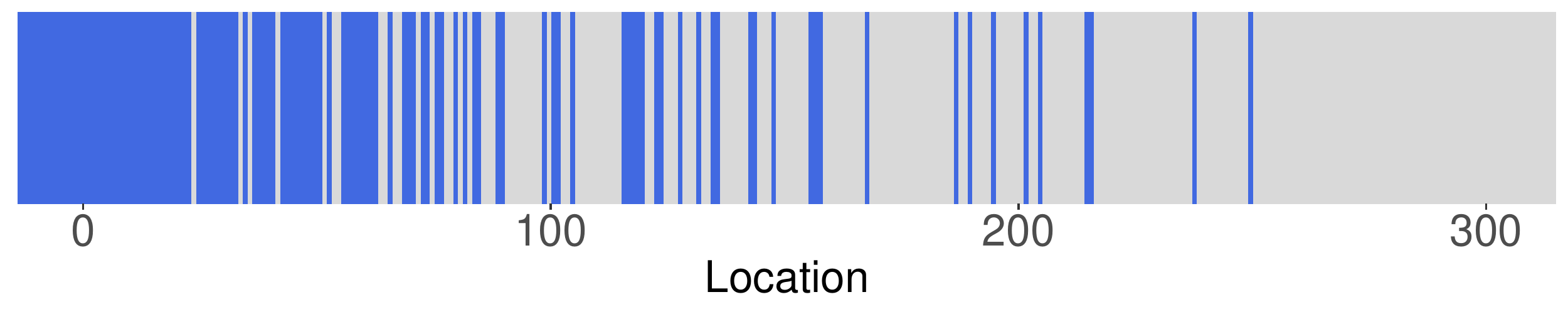}
\caption{Zoomed-in view of the first $300$ indices.}
\label{fig:sim1_str_zoomin}
\end{subfigure}
\caption{One-dimensional hypothesis structure.}
\end{figure}

In each trial, we draw a sample of size $n = 1000$ from $P_{XY}$ and
run all candidate methods on this sample. Figure \ref{sim1_res} shows
the power and FDR of each method versus target FDR levels. All methods
control FDR as we expected. The adaptive knockoffs outperforms vanilla
knockoffs and other p-value based procedures by a wide margin. We also
plot the realized ordering of vanilla knockoffs and adaptive knockoffs
(with our Bayesian filter) in Figure \ref{fig:vordering1} and Figure
\ref{fig:adaordering1} respectively. We can observe that adaptive
knockoffs places more non-nulls towards the end of the ordering and,
consequently, makes more true discoveries.

The p-value based methods perform unsatisfactorily here because
p-values are of low quality. As an aside, we note that it is often
challenging to obtain valid p-values, not to mention high quality
ones; for instance, \citet{dezeure2015high} and
\citet{lei2019assumption} explain that getting p-values from the
simplest linear model in reasonably high dimensions is already a
challenge if we do not impose stringent assumptions.

\begin{figure}[h!]
\centering
\includegraphics[width = 1\textwidth]{./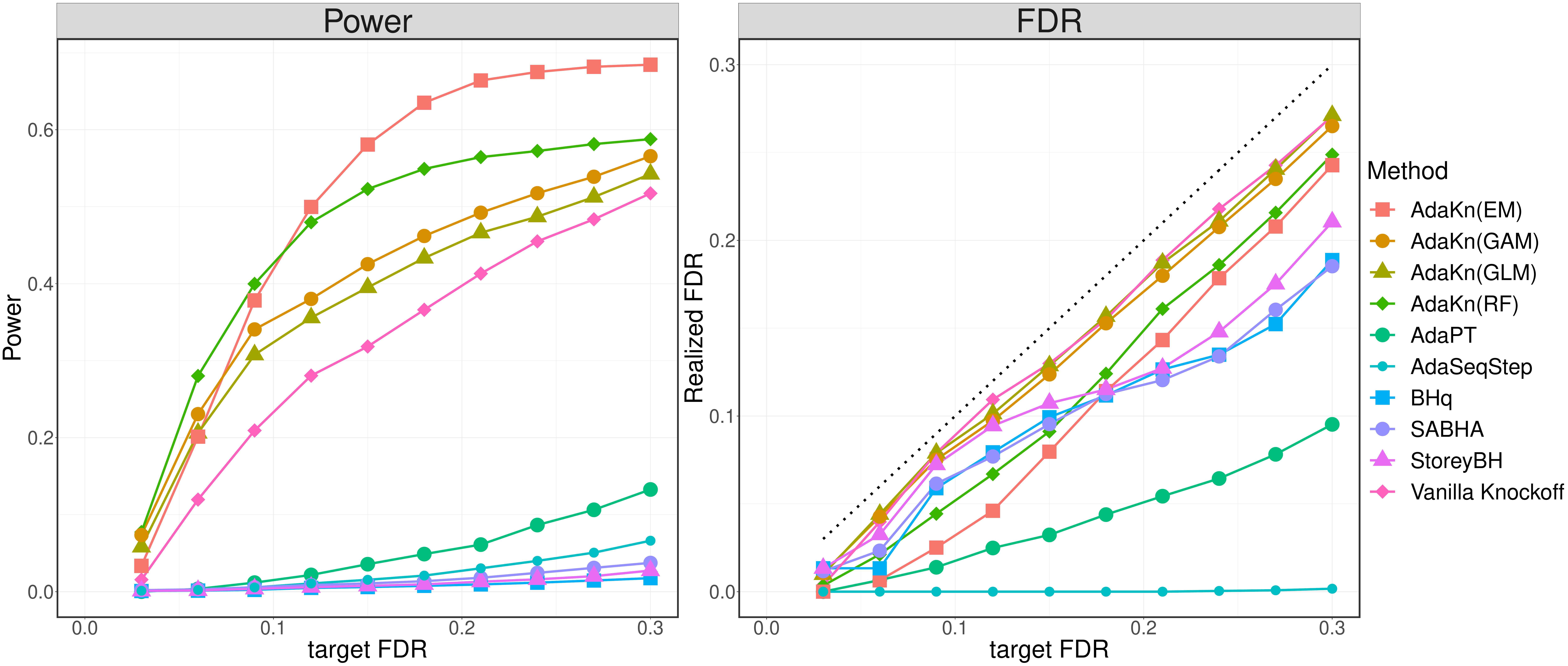}
\caption{Power (left) and FDR (right) versus target FDR values.
}
\label{sim1_res}
\end{figure}
\begin{figure}[h!]
\begin{subfigure}{0.45\textwidth}
\centering
\includegraphics[width = 0.9\textwidth]{./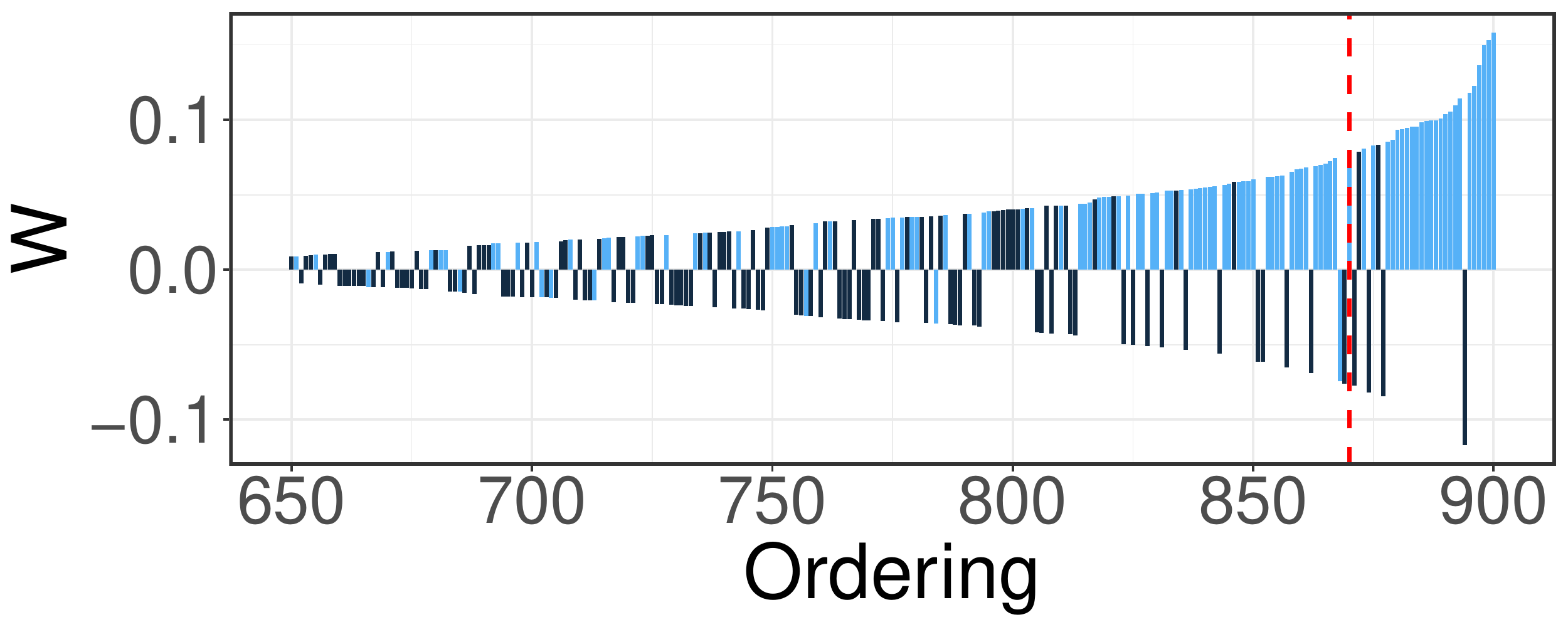}
\caption{Model-X knockoffs.}
\label{fig:vordering1}			
\end{subfigure}
\begin{subfigure}{0.45\textwidth}
\centering
\includegraphics[width = 0.9\textwidth]{./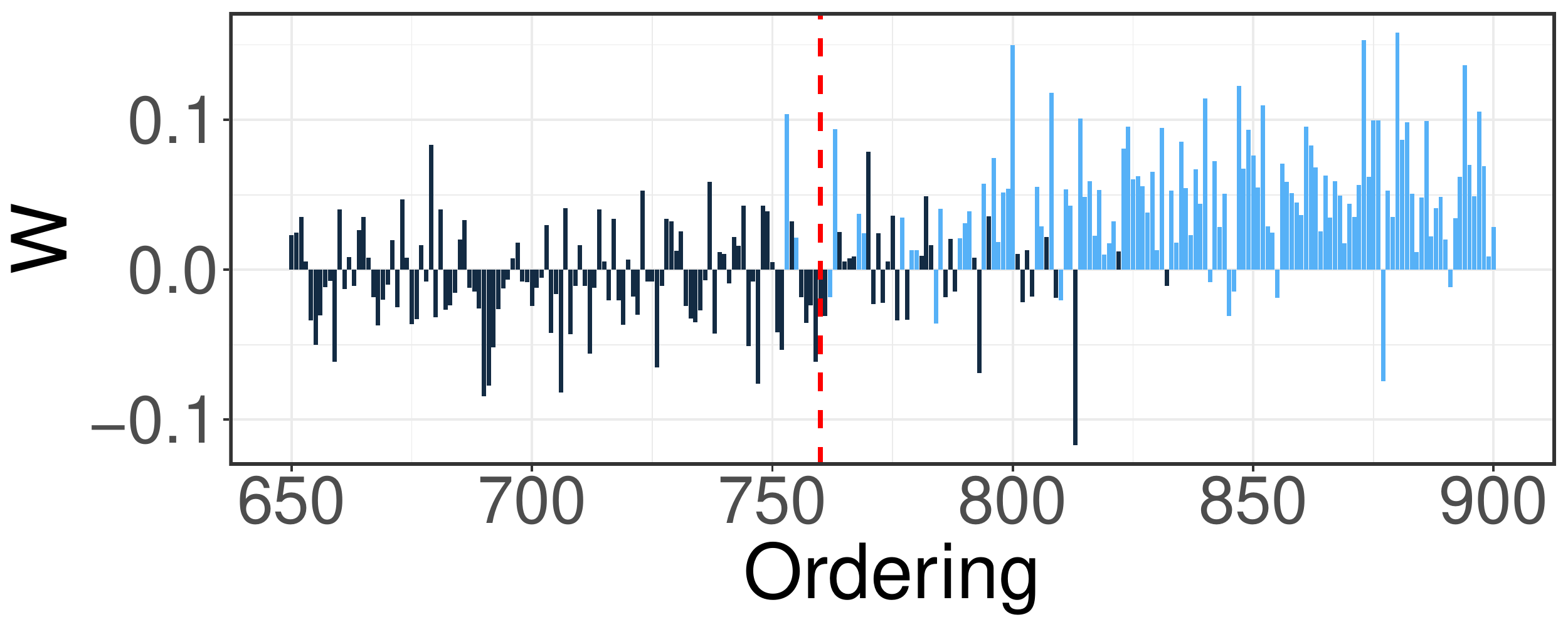}
\caption{Adaptive knockoffs.}
\label{fig:adaordering1}			
\end{subfigure}
\caption{(a) Realized ordering of vanilla knockoffs. (b) Realized
  ordering of adaptive knockoffs with the Bayesian filter. The x-axis
  is the ordering index and the y-axis is $W$. The blue bars represent
  the non-nulls; the black bars represent the nulls. The dashed red
  lines correspond to the selection thresholds for $q=0.2$, i.e., the
  features after the red line with positive signs are selected.  }
\label{fig:adaordering}			
\end{figure}

\subsection{Simulation 2: two-dimensional side information}\label{sec:simul2}

The simulated dataset is of size $n = 1000$ and $p=1600$. Conditional on $X$, $Y$ is generated from a logistic model:
\begin{align}
Y|X_1,\ldots,X_p \sim \text{Bernoulli}\left(\dfrac{\exp(\beta_1 X_1+\ldots +\beta_pX_p)}{1+\exp(\beta_1 X_1+\ldots +\beta_pX_p)}\right).
\end{align}
The entries of $\beta$ `live' on a two-dimensional plane and the
location of $\beta_j$ on the plane is described by a pair of
coordinates $(r(j),s(j))$, as in Figure \ref{fig:str}. In all, there are
$m=201$ blue nodes, representing the nonzero entries of
$\beta$. Details about the signal locations are in Appendix
\ref{sec:impdetails}. The magnitude of the nonzero entries is set to
$\frac{25}{\sqrt{n}}$ and the signs are generated via i.i.d.~coin
flips. The vector $X$ of covariates is drawn i.i.d.~from a
discrete-time Gaussian process with zero mean and covariance structure: 
\begin{align}
\cov(X_i,X_j) = e^{-3 ||U_i-U_j||_2^2},\qquad i,j\in[p],
\end{align}
where $U_j = (r(j),s(j))$. The side information is
the pair of coordinates of each feature.

This simulation setting is motivated by magnetic resonance imaging
(MRI) studies. For example, the hypotheses (nodes) are the voxels in a
structural MRI scan and the response is a $0$-$1$ variable indicating
whether the subject has Alzheimer's disease. Due to the spatial
correlation between the nodes, the signals often exhibit cluster
structures and our setup presents a simplified version of such
structures. Given the context, one may ask whether we should treat the
clusters themselves rather than the voxels as unit of inference. The
debate between cluster-based inference and voxel-based inference seems
still ongoing in the neuroimaging society. In particular, researchers
have recently observed that cluster-based inference often suffers from
low specificity (we do not know how many significant voxels there are
within a significant cluster) and neuroscientists are calling for
inference methods with higher resolution (see e.g.,
\citet{woo2014cluster,rosenblatt2018all}). Here we adopt the
voxel-based inference as in \citet{efron2012large}.

\begin{figure}[h!]
\centering
\includegraphics[width=0.5\textwidth]{./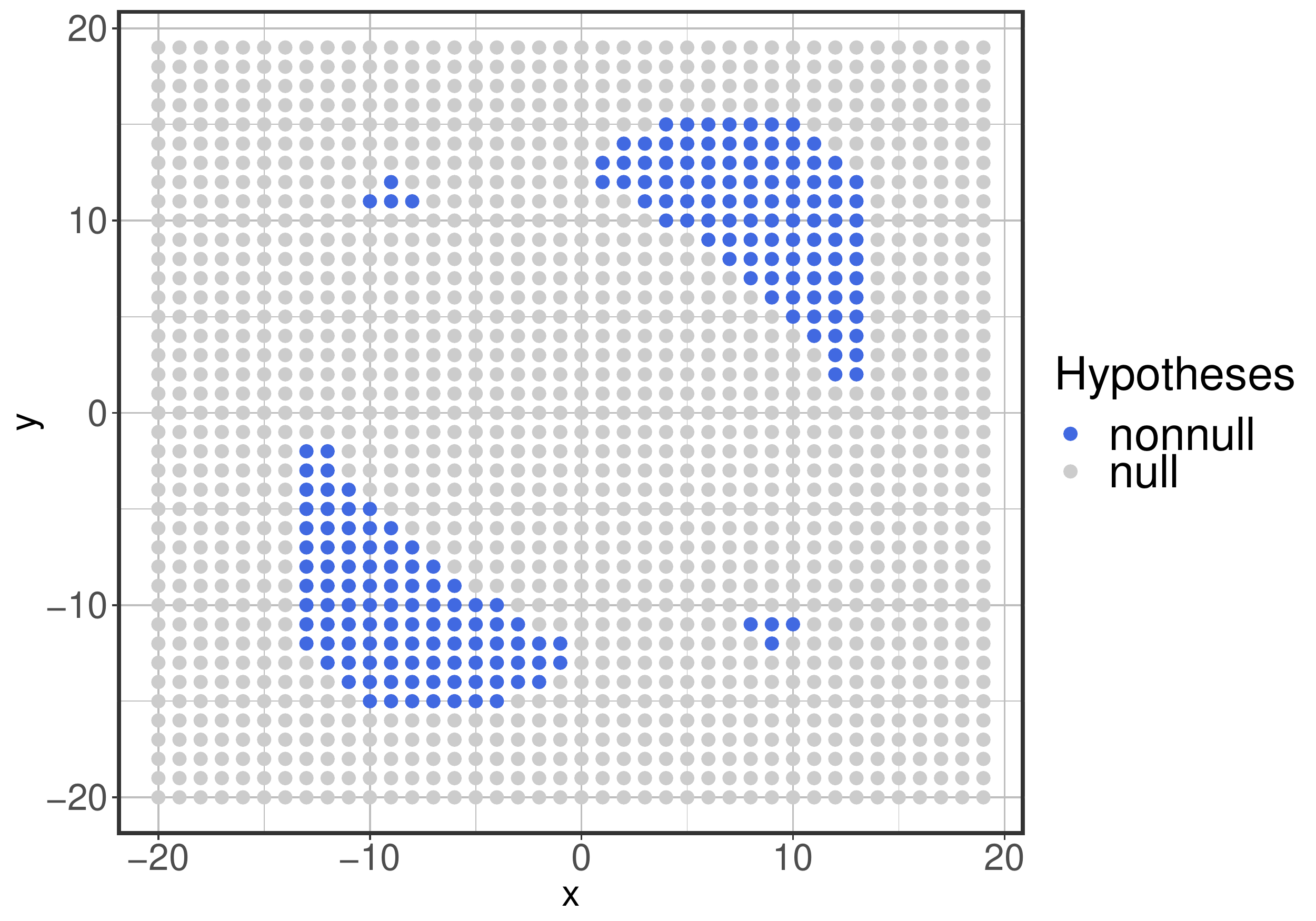}
\caption{Two-dimensional hypothesis structure. Blue nodes correspond to non-nulls and gray nodes to nulls. }
\label{fig:str}
\end{figure}

In this simulation, $p$ is larger than $n$, and obtaining valid
p-values is a problem. Hence, we here focus on comparing the
knockoff-based methods. Figure \ref{sim2_res} shows the power and FDR
of all the candidate methods. Again all methods control the FDR as
expected. Adaptive knockoffs with a Bayesian filter or random forest
filter outperform vanilla knockoffs by a wide margin. Figure
\ref{fig:vordering2} and \ref{fig:adaordering2} show the realized
ordering of vanilla knockoffs and adaptive knockoffs with the Bayesian
filter respectively. Adaptive knockoffs is able to place more
non-nulls towards the end of the ordering and has higher power. The GLM and GAM filters have almost the same power as vanilla knockoffs
because their models are too simple and cannot capture the
two-dimensional structure of the side information.

\begin{figure}[h!]
\centering
\includegraphics[width = 1\textwidth]{./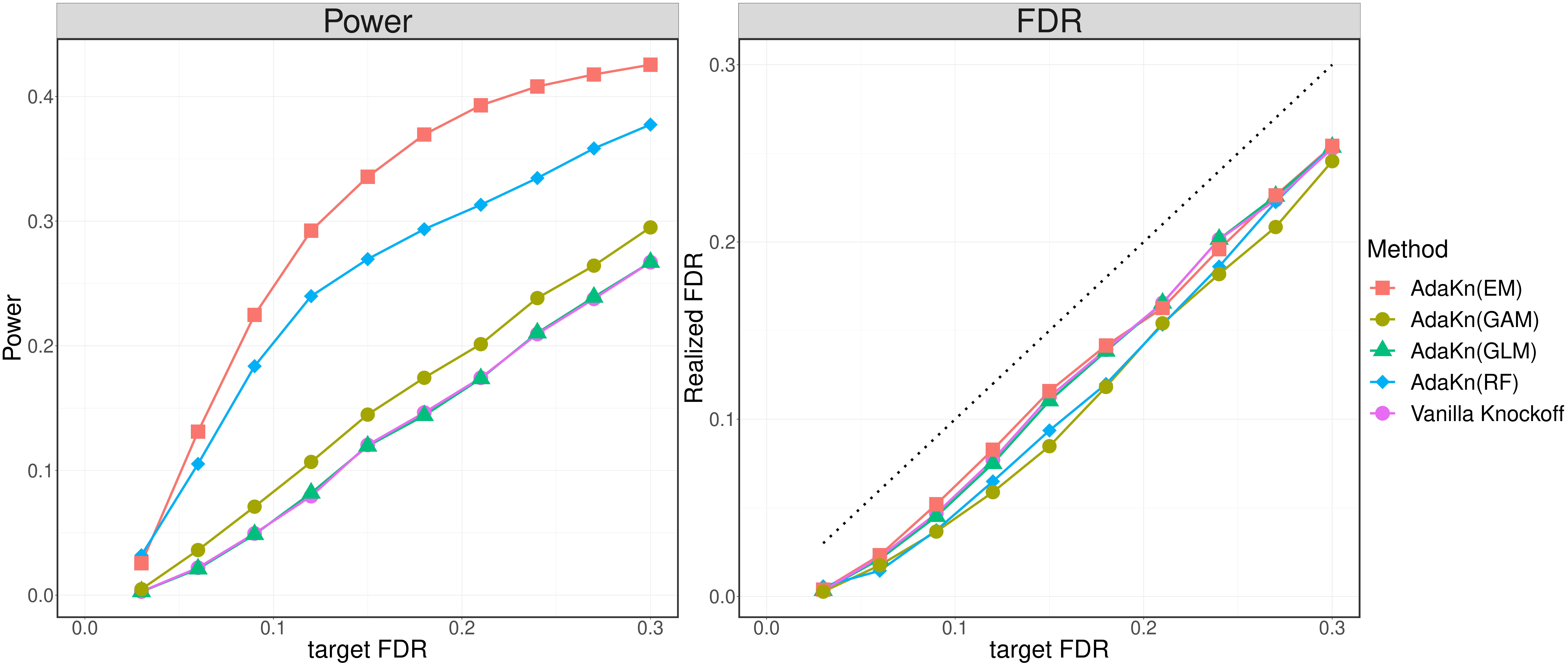}
\caption{Power (left) and FDR (right) versus target FDR values.}
\label{sim2_res}
\end{figure}

\begin{figure}[h!]
\begin{subfigure}{0.45\textwidth}
\centering
\includegraphics[width = 0.9\textwidth]{./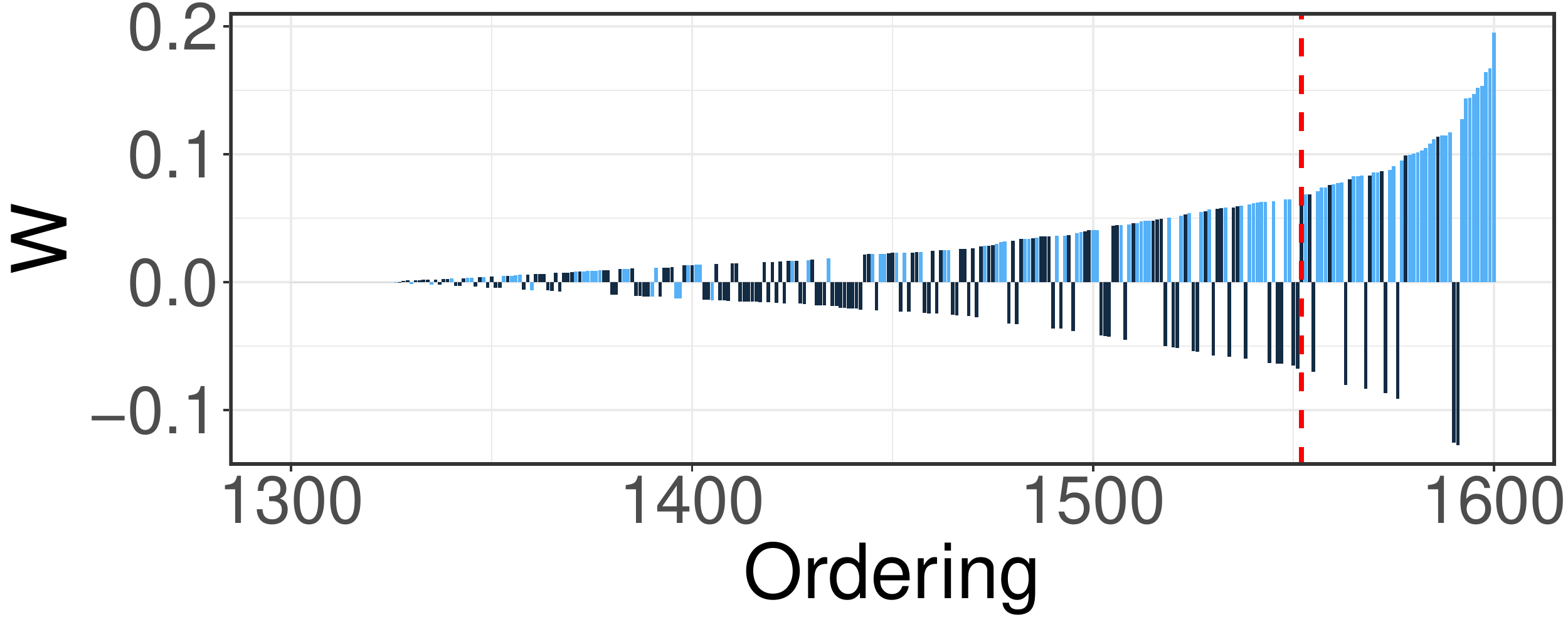}
\caption{Model-X knockoff.}
\label{fig:vordering2}			
\end{subfigure}
\begin{subfigure}{0.45\textwidth}
\centering
\includegraphics[width = 0.9\textwidth]{./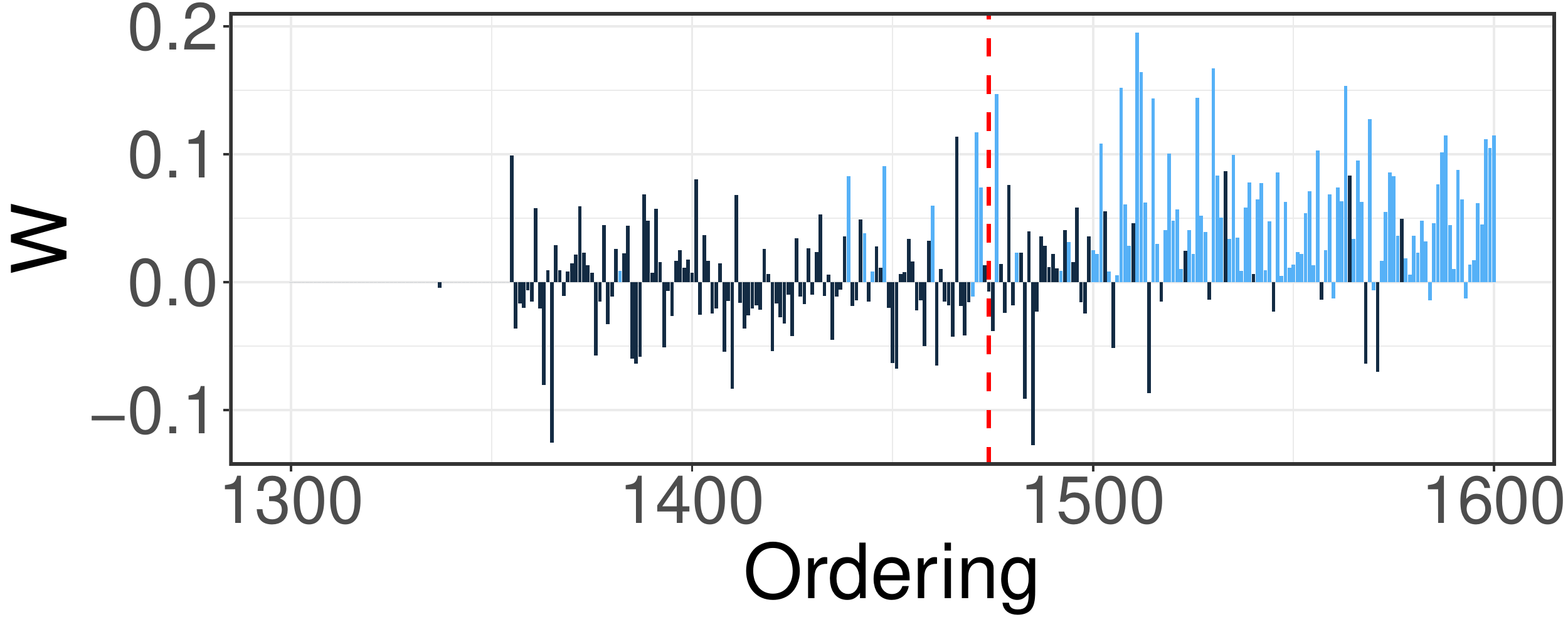}
\caption{Adaptive knockoffs.}
\label{fig:adaordering2}			
\end{subfigure}
\caption{(a) Realized ordering of vanilla knockoffs in Simulation 2; (b) Realized ordering of adaptive knockoffs with the Bayesian filter. The setup is otherwise the same as in Figure \ref{fig:adaordering}.
}
\end{figure}

\section{Applications}\label{sec:application}

\subsection{GWAS}
In Section \ref{sec:motivatingexample} we have
presented the results of our method applied to the
WTCCC dataset (Crohn's disease). In this section, we
discuss in detail the data analysis implementation,
and in addition, apply our methods to the Northern
Finland 1996 Birth Cohort study of metabolic syndrome
(NFBC).
\subsubsection{Overview of the data}
We have already described the WTCCC dataset in Section
\ref{sec:motivatingexample}. The NFBC dataset contains
information on $n=5402$ individuals from northern
Finland that includes genotypes at approximately
$300,000$ SNPs and nine phenotypes. The exact number
of effective observations are slightly different
across phenotypes because values are missing in some
of them. In this paper, we focus on low-density
lipoprotein (LDL) and high-density lipoprotein (HDL)
phenotypes. The inferential goal is to discover SNPs
significantly associated with LDL and HDL in the
Finnish population.

\subsubsection{Data pre-processing and SNP pruning}
\paragraph{Pre-processing} For the WTCCC Crohn's disease dataset, we
follow the pre-processing steps in \citet{candes2018panning} and for
the NFBC dataset, we follow the pre-processing steps in
\citet{sabatti2009genome,barber2019knockoff,sesia2018gene}. Table
\ref{tab:datasetdescript} lists the number of SNPs left after
pre-processing in the column named ``$p$ (pre-clustering)''.

\paragraph{Clustering} After pre-processing we further conduct a
clustering step to deal with the high correlation between SNPs. We
follow the method in
\citet{candes2018panning,sesia2018gene,barber2019knockoff} to cluster
SNPs and choose a representative from each cluster. Table
\ref{tab:datasetdescript} lists the number of SNP clusters in the
column ``$p$ (post-clustering)''. From now on, our inferential goal is
to discover important SNP clusters.

\begin{table}[h!]
\centering
\begin{tabular}{|c|c|c|c|c|}
\hline
Dataset & Phenotype & $n$ & $p$ (pre-clustering)& $p$ (post-clustering)\\
\hline
WTCCC& CD & $4913$ & $377{,}749$ & $71,145$\\
\hline
NFBC & LDL & $4682$ & $328{,}934$ & $59,005$\\
\hline
NFBC & HDL & $4700$ & $328{,}934$ & $59,005$\\
\hline
\end{tabular}
\caption{Description of the datasets.}
\label{tab:datasetdescript}
\end{table}

\subsubsection{Side information acquisition}\label{sec:sideinfo}
\paragraph{Crohn's disease} As discussed in Section
\ref{sec:motivatingexample}, we obtain the marginal p-values from
inflammatory bowel disease (IBD) studies in East Asia and Belgium
\citep{franke2010genome,liu2015association,goyette2015high} as side
information. In case a SNP is recorded in both studies, we use a
  weighted mean of the p-values as the side information; we give a
  larger weight to the p-values from the East Asia study because it
  contains more samples (the weights are respectively $1 - 1/101$ and
  $1/101$). When a SNP in our dataset is not recorded in a study, we
  apply the procedure above after imputing the missing p-value with a
  one.

\paragraph{Lipids} For HDL and LDL, we obtain summary statistics
reported by \citet{loh2018mixed}.\footnote{The summary statistics are downloaded from \url{https://data.broadinstitute.org/alkesgroup/UKBB/}.} Their
results are based on the UK Biobank dataset, which comprises genetic
information on a range of phenotypes of individuals from the UK. The
genetic information in the UK population can serve as  a reference for our study in the Finnish population. Explicitly, we obtain the association p-values reported for ``self-reported high cholesterol
level''. As before, if no p-value is found to match a SNP, we set the
corresponding side information to be one. The motivation here is that if a SNP is not even recorded, the chance of being significant will likely be low.  

\subsubsection{Implementation details}
\paragraph{Knockoff construction} We use the HMM knockoffs from
\citet{sesia2018gene} and follow their suggestion to set the number of
latent haplotype clusters to twelve. Knockoffs are generated separately for $22$ chromosomes and for the two datasets.\\

\paragraph{Feature importance statistics} Given the response $Y$ and augmented normalized covariate matrix $(X,\tX)$, we perform Lasso regression of $Y$ on $(X,\tX)$ and obtain Lasso coefficients $(\beta,\tilde{\beta})$. The penalty parameter $\lambda$ is chosen from a $10$-fold cross validation. The resulting feature importance statistic for each SNP is the difference between the magnitude of the original and the knockoff Lasso coefficients, i.e., $W_j =|\beta_j| - |\tilde{\beta}_j|$.\\

\paragraph{Adaptive knockoff filter} Each SNP is associated with a
p-value obtained from other studies.  We do not directly feed the
p-values to our filter but instead order the SNPs according to their
p-values and use the ranks of the SNPs as input of our filter. We use
the Bayesian two-group model filter introduced in Section
\ref{sec:bayesfilter} with the default setting except for the fact
that in the initialization step, we reveal the features whose $|W_j|$
is below a pre-specified threshold.  The threshold is $0.03$ for
Crohn's disease, $0.005$ for LDL and $0.0005$ for HDL. As far as
estimating the FDR, we use the less conservative $\widehat{\fdr}_0$.

\subsubsection{Results}
We apply adaptive knockoffs with target FDR level $q=0.1$. Since the
knockoff-based algorithms are essentially random and depend on the
realizations of $\tX$, we generate $50$ knockoffs independently
conditioning on $(X,Y)$. We conduct analysis on every realization of
$\tX$ and report the average number of discoveries.  In Appendix
\ref{appendix:boxplot}, we provide boxplots of discovery numbers and
in Appendix \ref{appendix:snps} the full list of discovered SNPs. The
average number of discovered SNPs for Crohn's disease has been
presented in Table \ref{tab:discories}, and the results for the NFBC
dataset are shown in Table \ref{tab:nfbc}. We compare our results with
\citet{sabatti2009genome} and \citet{sesia2018gene}; the former adopts
a marginal test with a p-value threshold of $5\times 10^{-7}$, and the
latter adopts a $0.1$ target FDR level. The results show that our
algorithm greatly improves the power of the original knockoff
procedure.

We would like to stress once more that FDR control holds regardless of
the correctness of the p-values we use.  Also, we are not merely
re-discovering what is already known since side information concerns
other populations.

\begin{table}[h!]
\centering
\begin{tabular}{|c|c|c|}
\hline
Method& Number of discoveries (HDL) & Number of discoveries (LDL)\\
\hline
\citet{sabatti2009genome} & $5$ & $6$\\
\hline
HMM knockoffs \citep{sesia2018gene} & $8$ & $9.8$\\
\hline
\textbf{Adaptive knockoffs} & \textbf{12.5} & \textbf{18.3}\\
\hline
\end{tabular}
\caption{(Average) number of SNP discoveries
made by different methods with target FDR
level $q=0.1$. For knockoff-based
algorithms, the reported number is averaged
over multiple realizations of
$\tX$ (for adaptive knockoffs, this number
is $50$).}
\label{tab:nfbc}
\end{table}

%
%

\section{Future work}
This paper generalizes the knockoff procedure to a setting where side
information associated with features is available. We close by
discussing a few interesting directions for future work.

\paragraph{GWAS in the minority populations} 
In this paper, we applied the adaptive knockoff procedure to GWAS in
the British and Finnish population and obtained summary statistics
from other populations. It will be interesting to apply our method in
a setting where the inferential target is a minority population (e.g.,
African-Americans or Hispanic-Americans). In truth, minority
populations are often under-represented in GWAS and these studies are,
therefore, often underpowered. Since there are abundant genetic data
from the European population, exploiting information from this
population to empower GWAS in minority populations is becoming a
popular research topic (see e.g.,
\cite{coram2015leveraging,coram2017leveraging}).  Our method is
tantalizing because we have seen how easily we can use GWAS
statistics from one population to boost power in another.

\paragraph{Beyond GWAS}
Knockoff-based procedures have been successfully applied to genetics,
and we would like to see them used in other areas. One potential area
is neuroimaging, a field in which researchers are interested in
discovering locations in the brain that are associated with certain
trauma. The neuroimage data (e.g., structural MRI) also has a spatial
structure which can be used as side information. Applying adaptive
knockoffs to such datasets promises important diagnostic information.

{\small
\subsection*{Acknowledgement}
E.J.C.~was partially supported by the National Science Foundation via
grant DMS--1712800, by the Simons Foundation via the Math + X award,
and by a generous gift from TwoSigma.  Z.~R.~was partially supported
by the same Math + X award. Z.~R.~thanks Stephen Bates, Nikolaos Ignatiadis, Eugene
Katsevich and Matteo Sesia for their valuable comments on this
project.
}

\bibliography{adaptiveKnockoff}
\bibliographystyle{apalike}
\clearpage
\appendix

\section{Conditional expectation (Algorithm \ref{algo.em})}\label{appendix:supp}
Suppose we are at step $k$ of Algorithm \ref{algo.akn} and step $s$ of
Algorithm \ref{algo.em}.  Recall that $\calG$ is the $\sigma$-field
generated by the current information. Since the distribution of $W_j$
is a mixture of a point mass at $0$ and an absolutely continuous
distribution, we need to treat the conditional probability and
conditional expectation differently depending on whether $W_j = 0$ or
not. To avoid complication, we slightly abuse notation and let
$\p (W_j|\cdot)$ refer to a probability when $W_j = 0$ and to a
density otherwise.

\paragraph{Revealed hypotheses} For $j
\in\{\pi_1,\ldots,\pi_k\}$, the value of $W_j$
is known conditional on $\calG$. 
\begin{itemize}
\item The conditional expectation of $H_j$ is
\begin{align}	
\E[H_j|\calG] &= \dfrac{\p(H_j = 1,W_j |\calG)}{\p(W_j|\calG)},
\end{align}
where
\begin{align}
&\p(H_j = 1,W_j  |\calG) = \nu(U_j)p_1(W_j;U_j),\label{eqn:condprob1}\\
&\p(W_j|\calG)=\nu(U_j)p_1(W_j;U_j)+(1-\nu(U_j))p_0(W_j;U_j)\label{eqn:condprob2}.
\end{align}

\item The conditional expectation of $Y_jH_j$ is 
\begin{align}
\E[Y_jH_j|\calG]  & = Y_j \E[H_j|\calG],
\end{align}				
where $\E[H_j|\calG]$ has been computed above.
\end{itemize}
\paragraph{Unrevealed hypotheses} For $j\in
[p]\backslash\{\pi_1,\ldots,\pi_k\}$, we know the magnitude 
of $W_j$ conditional on
$\calG$ but not its sign. 
\begin{itemize}
\item The conditional expectation of $H_j$ is 
\[
\E[H_j|\calG] = \dfrac{\p(H_j = 1,|W_j ||\calG)}{\p(|W_j||\calG)},
\]
where
\begin{align}
\p(H_j=1,|W_j||\calG)=& \nu(U_j)p_1(|W_j|;U_j)+\nu(U_j)p_1(-|W_j|;U_j)\\
\p(|W_j||\calG) =& \nu(U_j)p_1(|W_j|;U_j)+\nu(U_j)p_1(-|W_j|;U_j)+\\
&(1-\nu(U_j))p_0(|W_j|;U_j)+(1-\nu(U_j))p_0(-|W_j|;U_j).
\end{align}

\item The conditional expectation of $Y_jH_j$ is 
\[\E[Y_jH_j|\calG] \\
= y_{j,1}\p(H_j=1,W_j>0|\calG)+ y_{j,2}\p(H_j=1,W_j<0|\calG),
\]
where
\begin{align}						
&\p(H_j=1,W_j>0|\calG) = \dfrac{\p(H_j=1,W_j = |W_j||\calG)}{\p(|W_j||\calG)},\label{eqn:condexp1}\\
&\p(H_j=1,W_j<0|\calG) = \dfrac{\p(H_j=1,W_j = -|W_j||\calG)}{\p(|W_j||\calG)},\label{eqn:condexp2}\\
& y_{j,1} = \log(\exp(|W_j|)+1)-|W_j|,\\
& y_{j,2} = \log(\exp(-|W_j|)+1)+|W_j|.
\end{align}
The numerators and denominators in \eqref{eqn:condexp1} and
\eqref{eqn:condexp2} have been calculated in \eqref{eqn:condprob1} and
\eqref{eqn:condprob2}.

\end{itemize}
\section{Boxplots of the number of discoveries in Section \ref{sec:application}}\label{appendix:boxplot}
We present boxplots of the number of discoveries from multiple knockoffs
realizations.

\begin{figure}[h!]
\begin{subfigure}{0.5\textwidth}
\centering
\includegraphics[width = 1\textwidth]{./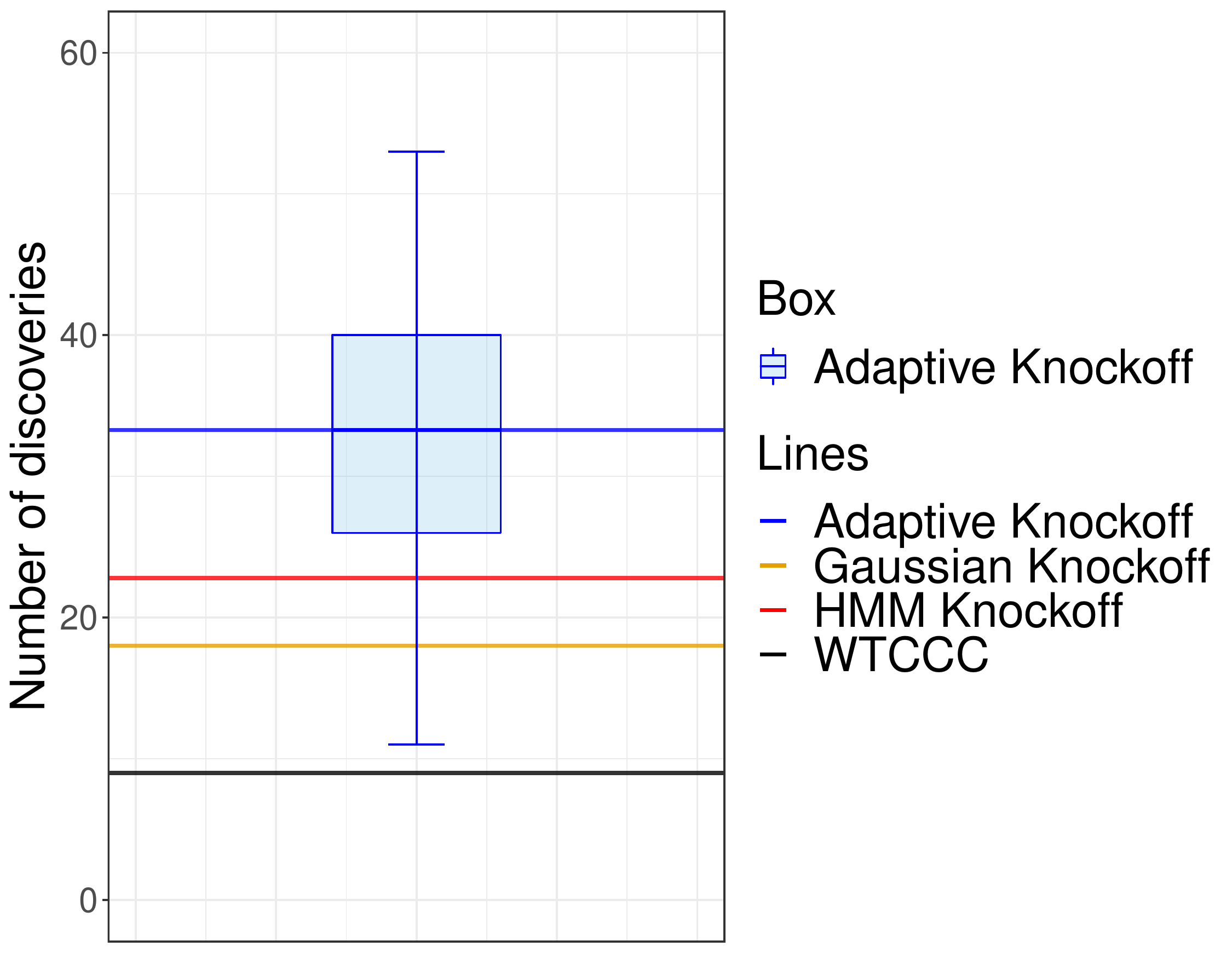}
\caption{Crohn's disease.}
\label{wtccc_rej}				
\end{subfigure}
\hfill
\begin{subfigure}{0.5\textwidth}
\centering
\includegraphics[width = 1\textwidth]{./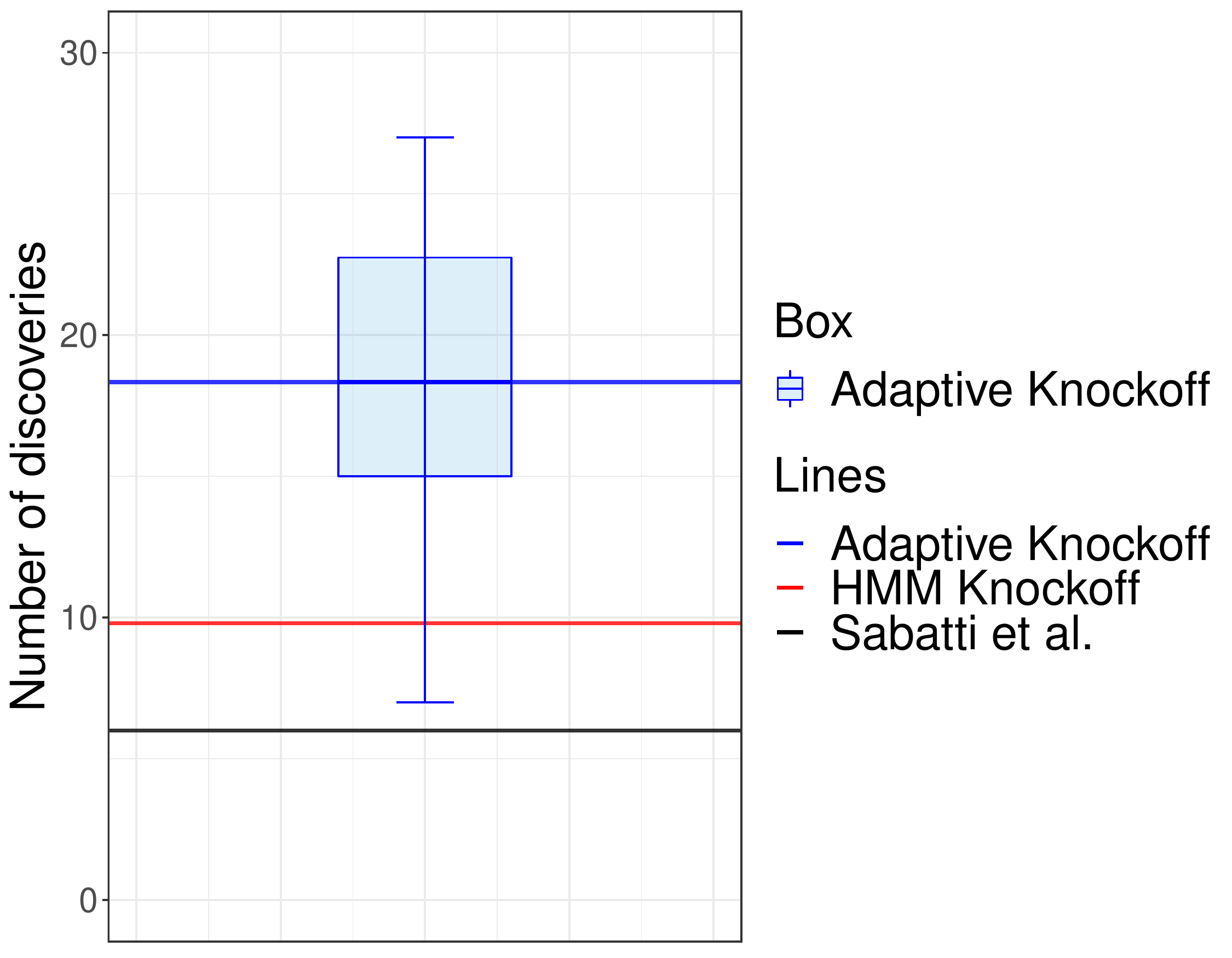}
\caption{LDL.}
\label{nfbc_ldl_rej}
\end{subfigure}
\hfill
\begin{subfigure}{0.5\textwidth}
\centering
\includegraphics[width = 1\textwidth]{./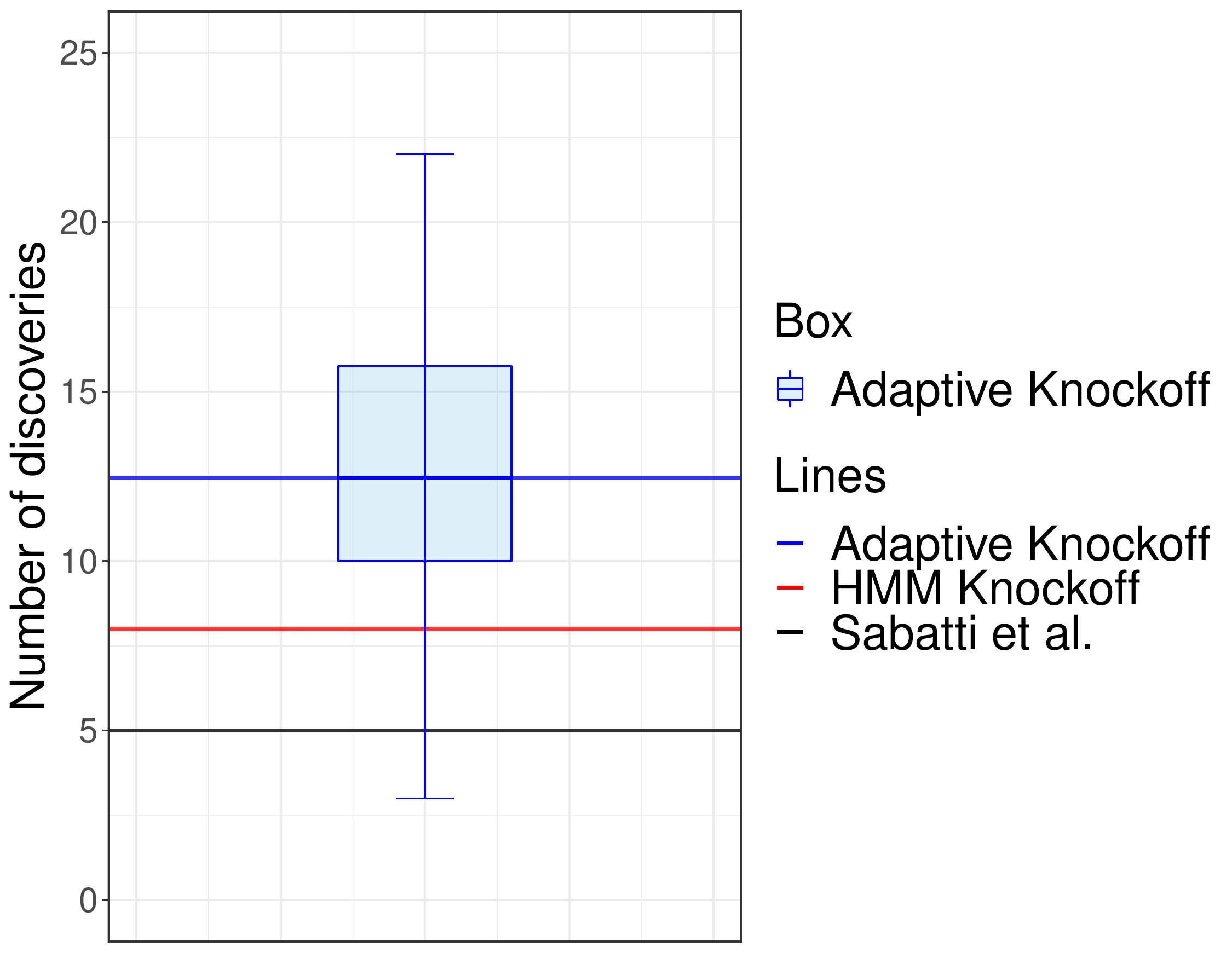}
\caption{HDL.}
\label{nfbc_hdl_rej}
\end{subfigure}
\caption{Number of discoveries from
  multiple knockoffs realizations.  The solid lines are the (average)
  numbers of discoveries and the boxplots represent the discoveries
  made by adaptive knockoffs in $50$ repetitions.}
\end{figure}
\clearpage
\section{Full list of discovered SNPs in Section \ref{sec:application}}\label{appendix:snps}
We present the full list of SNP clusters discovered by our adaptive
knockoff procedure. Since we run the algorithm $50$ times, we count
the frequency of SNP representatives being selected and report those
with a selection frequency greater than or equal to $30\%$. For each
cluster representative, we also report the size of the corresponding
cluster, the chromosome it belongs to, and the position range of the
cluster. The position of SNPs are reported as in the original dataset:
the WTCCC dataset follows the convention of Human Genome Build $35$
and the NFBC dataset follows the convention of Human Genome Build
$37$. Lastly, we compare our results to previous works. For Crohn's
disease, we indicate if our discovered SNPs are discovered by
\cite{wellcome2007genome,candes2018panning,sesia2018gene}. For NFBC we
compare with \cite{sabatti2009genome,sesia2018gene}.  An asterisk
indicates that the reported SNP is not exactly in the cluster but is
within the position range ($0.5$Mb).

\subsection{LDL}
\begin{table}[h!]
\centering
\begin{tabular}{|c|c|c|c|c|c|c|}
\hline
\makecell{Cluster\\ representative\\
(cluster size)} & \makecell{Selection\\ frequency (\%)} & Chr. & \makecell{Position\\ range (Mb)} & \makecell{Selection\\ frequency (\%) in\\  \cite{sesia2018gene}} & \makecell{Found in \\\cite{sabatti2009genome}?}\\
\hline
rs10198175 (1) &    100 & 2 & 21.13-21.13 & 80 & rs693∗\\
\hline
rs10953541 (58) &    100 & 7 & 106.48-107.30 & 76 & No.\\
\hline
rs157580 (4) &    100 & 19 & 45.40-45.41 & 94 & rs157580\\
\hline
rs2228671 (2) &    100 & 19 & 11.20-11.21& 97 & rs11668477\\
\hline
rs557435 (21) &    100 & 1 & 55.52-55.72 & 92 & No.\\
\hline
rs1713222 (45) &    98 & 2 & 21.11-21.53 & 41 & rs693\\
\hline
rs646776 (5) &    98 & 1 & 109.80-109.82 & 97 & rs646776\\
\hline
rs174450 (16) &    94 & 11 & 61.55-61.68 & 36 & rs1535\\
\hline
rs2802955 (1) &    92 & 1 & 235.02-235.02 & 40& No.\\
\hline
rs4803750 (1) &    90 & 19 & 45.25-45.25 & &  No.\\
\hline
rs6756629 (2) &    86 & 2 & 44.07-44.08 & & No.\\
\hline
rs688 (4) &    86 & 19 & 11.16-11.24 & & rs11668477*\\
\hline
rs4906908 (8) &    82 & 15 & 26.97-27.05 & & No.\\
\hline
rs12427378 (43) &    78 & 12 & 50.43-51.31 & 19 & No.\\
\hline
rs4844614 (34) &    74 & 1 & 207.30-207.88 & 99 & rs4844614\\
\hline
rs10409243 (5) &    70 & 19 & 10.33-10.37 & & No.\\
\hline
rs12670798 (11) &    68 & 7 & 21.57-21.71 & & rs693*\\
\hline
rs10056811 (94) &    60 & 5 & 74.24-75.24 & & No.\\
\hline
rs11878377 (39) &    50 & 19 & 10.63-11.18 & & rs646776*\\
\hline
rs11615 (17) &    44 & 19 & 45.91-46.10 & & No.\\
\hline
rs2919843 (3) &    42 & 19 & 45.19-45.20 & & No.\\
\hline
rs9696070 (6) &    40 & 9 & 89.21-89.24 & 25 & No.\\
\hline
rs1105879 (33) &    32 & 2 & 234.50-234.70 & & No.\\
\hline
\end{tabular}
\caption{SNPs clusters discovered to be associated with LDL.}
\label{tab:ldl_list}
\end{table}

\newpage
\subsection{HDL}
\begin{table}[h!]
\centering
\begin{tabular}{|c|c|c|c|c|c|c|}
\hline
\makecell{Cluster\\ representative\\
(cluster size)} & \makecell{Selection\\ frequency (\%)} & Chr. & \makecell{Position\\ range (Mb)} & \makecell{Selection\\ frequency (\%) in\\  \cite{sesia2018gene}} & \makecell{Found in \\\cite{sabatti2009genome}?}\\
\hline
rs1532085 (4) &    100 & 15 & 58.68-58.70 & 100 & rs1532085\\
\hline
rs1532624 (2) &    100 & 16 & 56.99-57.01 & 99 & rs3764261\\
\hline
rs1800961 (1) &    98 & 20 & 43.04-43.04 & 100 & No.\\
\hline
rs255049 (142) &    98 & 16 & 66.41-69.41 & 95 & rs255049\\
\hline
rs7499892 (1) &    98 & 16 & 57.01-57.01& 100 & rs3764261\\
\hline
rs10096633 (19) &    80 & 8 & 19.73-19.94 & 57 & No.\\
\hline
rs9898058 (1) &    78 & 17 & 47.82-47.82 & 55 & No.\\
\hline
rs17075255 (59) &    68 & 5 & 164.28-164.92 & 51 & No.\\
\hline
rs3761373 (1) &    62 & 21 & 42.87-42.87 & 43 & No.\\
\hline
rs12139970 (11) &    52 & 1 & 230.35-230.42 & 23 & No.\\
\hline
rs2575875 (10) &    52 & 9 & 107.63-107.68 & 28& No.\\
\hline
rs2849049 (6) &    44 & 9 & 15.29-15.31 && No.\\
\hline
rs173738 (3) &    42 & 5 & 16.71-16.73 & 12& No.\\
\hline
rs2019260 (24) &    38 & 5 & 16.41-16.59 && No.\\
\hline
rs2132167 (94) &    34 & 8 & 33.29-34.78 && No.\\
\hline
rs2426404 (1) &    32 & 20 & 50.64-50.64 && No.\\
\hline
rs9324799 (8) &    30 & 5 & 154.15-154.41 && No.\\
\hline
\end{tabular}
\caption{SNPs discovered to be associated with HDL.}
\label{tab:hdl_list}
\end{table}
\clearpage
\subsection{Crohn's disease}
\begin{longtable}[h!]{|c|c|c|c|c|c|c|c|c|}
\hline
\makecell{Cluster\\ representative\\
(cluster size)} & \makecell{Sel.\\ fre (\%)} & Chr. & \makecell{Position\\ range (Mb)} & \makecell{Selection\\ frequency (\%) in\\  Sesia et al.} & \makecell{Selection \\ frequency (\%) \\ in Cand{\`e}s et\\ al.?}& \makecell{Found in\\ WTCCC\\
et al.?}\\
\hline
rs11209026 (2) &    100 & 1 & 67.31-67.42 & 100 &100 & rs11805303*\\
\hline
rs11627513 (7) &    100 & 14 & 96.61-96.63 & 68 & 80 & No.\\
\hline
rs11805303 (16) &    100 & 1 & 67.31-67.46 & 95 & 80 & rs11805303\\
\hline
rs17234657 (1) &    100 & 5 & 40.44-40.44 & 97 & 90 & rs17234657\\
\hline
rs4246045 (46) &    100 & 5 & 150.07-150.41 & 66 & 50 & rs1000113\\
\hline
rs6431654 (20) &    100 & 2 & 233.94-234.11& 99 & 100 & rs10210302 \\
\hline
rs6500315 (4) &    100 & 16 & 49.03-49.07 & 73 & 60 & rs17221417\\
\hline
rs6688532 (33) &    100 & 1 & 169.40-169.65 & 98 & 90 & rs12037606 \\
\hline
rs7095491 (18) &    100 & 10 & 101.26-101.32 & 91 & 100 & rs10883365\\
\hline
rs2738758 (5) &    98 & 20 & 61.71-61.82 & 72& 60 & No.\\
\hline
rs4692386 (1) &    98 & 4 & 25.81-25.81& 56 & 40 & No.\\
\hline
rs3135503 (16) &    96 & 16 & 49.28-49.36 & 91 & 90& rs17221417\\
\hline
rs9469615 (2) &    94 & 6 & 33.91-33.92 & 48  & 30 & No.\\
\hline
rs4807569 (2) &    92 & 19 & 1.07-1.08 & 27 & & No.\\
\hline
rs6743984 (23) &    92 & 2 & 230.91-231.05 & 39 & 10 & No.\\
\hline
rs4263839 (23) &    90 & 9 & 114.58-114.78 & 56 & 30 & No.\\
\hline
rs17063661 (1) &    84 & 6 & 134.70-134.70 & & &No.\\
\hline
rs7497036 (19) &    82 & 15 & 72.49-72.73 & 22 & & No. \\
\hline
rs1451890 (26) &    78 & 15 & 30.92-31.01 & 15 & & No.\\
\hline
rs2390248 (13) &    78 & 7 & 19.80-19.89 & 54 & 50 & No.\\
\hline
rs10801047 (10) &    76 & 1 & 188.17-188.47 && &No.\\
\hline
rs1345022 (44) &    76 & 9 & 21.67-21.92& & 40 & No. \\
\hline
rs549104 (10) &    76 & 18 & 2.05-2.11 & & &No.\\
\hline
rs12529198 (31) &    74 & 6 & 5.01-5.10 & 23 & & No. \\
\hline
rs17694108 (1) &    74 & 19 & 38.42-38.42& &10&No. \\
\hline
rs10761659 (53) &    72 & 10 & 64.06-64.41 & 45 & 10& rs10761659\\
\hline
rs6601764 (1) &    72 & 10 & 3.85-3.85& 80& 100 & rs6601764 \\
\hline
rs7655059 (5) &    70 & 4 & 89.50-89.53 & 75& 40& No.\\
\hline
rs4870943 (10) &    66 & 8 & 126.59-126.62 & 14 & & No.\\
\hline
rs7768538 (1145) &    66 & 6 & 25.19-32.91 & 81& 60 & rs9469220\\
\hline
rs10172295 (127) &    48 & 2 & 57.86-59.01 & & & No.\\
\hline
rs946227 (4) &    48 & 6 & 138.12-138.13& & & No. \\
\hline
rs2836753 (5) &    44 & 21 & 39.21-39.23& 42 & 30 & No. \\
\hline
rs4959830 (11) &    42 & 6 & 3.36-3.41 & 20 & 10 & No.\\
\hline
rs9783122 (234) &    38 & 10 & 106.43-107.61 & 62 & 80 & No.\\
\hline
rs11579874 (17) &    34 & 1 & 197.61-197.76 &&&No.\\
\hline
rs4437159 (4) &    34 & 3 & 84.80-84.81 & 49 & 60 & No.\\
\hline
rs7726744 (46) &    34 & 5 & 40.35-40.71& 70& 50& rs17234657 \\
\hline
rs10916631 (14) &    30 & 1 & 220.87-221.08 & &40 & No.\\
\hline
rs2814036 (5) &    30 & 1 & 163.94-164.07&14& & No. \\
\hline
%
\caption{SNPs discovered to be associated with Crohn's disease.}
\label{tab:cd_list}
\end{longtable}

\section{Implementation details}
\label{sec:impdetails}
In Section \ref{sec:simul2}, feature $j$ is a signal if any of the
following conditions holds:
\begin{align}
\begin{cases}
&  ((\dfrac{r(j)}{9})^2+(\dfrac{s(j)}{9})^2-2)^4-(\dfrac{r(j)}{9})^3(\dfrac{s(j)}{9})^5<0,\\
& ((\dfrac{r(j)}{9}+1)^2+(\dfrac{s(j)}{9}-\dfrac{5}{4})^2-0.015<0,\\
& ((\dfrac{r(j)}{9}-1)^2+(\dfrac{s(j)}{9}+\dfrac{5}{4})^2-0.015<0.
\end{cases}
\end{align}

\end{document}